\def\noeditingmarks{}  
\DeclarePairedDelimiter\floor{\lfloor}{\rfloor}
\algrenewcommand{\algorithmicrequire}{\textbf{Input:}}
\algrenewcommand{\algorithmicensure}{\textbf{Output:}}
\algnewcommand{\LineComment}[1]{\State // #1}
\algrenewcommand\textproc{}
\newif\ifminted
\newif\ifintrouble
\newif\ifcuttext
\newcommand{\textred}[1]{\begingroup \color{red} #1\endgroup}
   \newcommand{\pgwrapper}[2]{\textred{#1: #2}}
   \newcommand{\pgwrapperb}[1]{\textbf{#1}}
   \newcommand{\pgwrapperb}[1]{}
   \newcommand{\pgwrapper}[2]{}
    \newcommand{\changebars}[2]{%
    [{\em \begingroup \color{magenta} #1 \endgroup}]
    [\begingroup \color{magenta} \sout{#2} \endgroup]}
    \newcommand{\changebars}[2]{#1}
\newcommand{\sys}{{Otak}\xspace}
\newcommand{\Sys}{\sys}
\newcommand{\sysstee}{{Otak-\textsc{STEE}}\xspace}
\newcommand{\sysmtee}{{Otak-\textsc{MTEE}}\xspace}
\newcommand{\delphi}{\textsc{DELPHI}\xspace}
\newcommand{\spdz}{\textsc{SPDZ}\xspace}
\newcommand{\privado}{\textsc{PRIVADO}\xspace}
\renewcommand*{\@fnsymbol}[1]{\ensuremath{\ifcase#1\or \star\or \dagger\or \ddagger\or
   \mathsection\or \mathparagraph\or \|\or **\or \dagger\dagger
   \or \ddagger\ddagger \else\@ctrerr\fi}}
\def\hn{\usefont{OT1}{phv}{mc}{n}\selectfont}
\newcommand{\mpfont}{\hn\scriptsize}
\newcommand{\MPworker}[2]{{\color{#1}\vrule\vrule}{\marginpar{\color{#1}\mpfont #2}}}
    \newcommand{\MP}[1]{\MPworker{red}{#1}}
    \newcommand{\MPtg}[1]{\MPworker{red}{#1}}
    \newcommand{\MPva}[1]{\MPworker{blue}{#1}}
    \newcommand{\MPmn}[1]{\MPworker{green}{#1}}
    \newcommand{\MPag}[1]{\MPworker{brown}{#1}}
   \newcommand{\MP}[1]{}
    \newcommand{\MPtg}[1]{}
     \newcommand{\MPmn}[1]{}
    \newcommand{\MPva}[1]{}
    \newcommand{\MPag}[1]{}
\newcommand\rmv[1]{}
\newcommand{\techReportOnly}[1]{}
\newcommand{\hssgen}{{\small\textsf{HSS.Gen}}\xspace}
\newcommand{\hssenc}{{\small\textsf{HSS.Enc}}\xspace}
\newcommand{\hsseval}{{\small\textsf{HSS.Eval}}\xspace}
\newcommand{\lprenc}{{\small\textsf{LPR.Enc}}\xspace}
\newcommand{\fssgen}{{\small\textsf{FSS.Gen}}\xspace}
\newcommand{\fsseval}{{\small\textsf{FSS.Eval}}\xspace}
\newcommand{\hssbkslprname}{\textsc{BKS-LPR}\xspace}
\newcommand{\bkslprgen}{{\small\textsf{BKS-LPR.Gen}}\xspace}
\newcommand{\bkslprenc}{{\small\textsf{BKS-LPR.Enc}}\xspace}
\newcommand{\bkslpreval}{{\small\textsf{BKS-LPR.Eval}}\xspace}
\DeclarePairedDelimiterX{\norm}[1]{\lVert}{\rVert}{#1}
\newcommand{\cpu}{\textsc{cpu}\xspace}
\newcommand{\cpus}{\textsc{cpu}s\xspace}
\newcommand{\myvec}[1]{\ensuremath{\mathit{\mathbf{#1}}}}
\newcommand{\mymatrix}[1]{\ensuremath{\mathit{\mathbf{#1}}}}
\newcommand{\func}{\mathcal{F}}
\newcommand{\view}{\mathsf{View}}
\newcommand{\adversary}{\mathcal{A}}
\newcommand{\simr}{\mathsf{Sim}}
\def\compactify{\itemsep0in \topsep2pt \parsep=0.00in \partopsep=0pt
\leftmargin2em}
\let\latexusecounter=\usecounter
\newenvironment{myenumerate}
  {\def\usecounter{\compactify\latexusecounter}
   \begin{enumerate}}
  {\end{enumerate}\let\usecounter=\latexusecounter}
\newenvironment{myenumerate3}
  {\def\usecounter{\itemsep0pt \topsep0pt \parsep=0ex
\partopsep=0pt \leftmargin1.5em\latexusecounter}
   \begin{enumerate}}
  {\end{enumerate}\let\usecounter=\latexusecounter}
\newenvironment{myenumerate4}
  {\def\usecounter{\itemsep=0ex \topsep1ex \parsep=1ex \partopsep=0pt
\leftmargin2em\latexusecounter}
   \begin{enumerate}}
  {\end{enumerate}\let\usecounter=\latexusecounter}
\newenvironment{myitemize}%
  {\begin{list}{\labelitemi}{\itemsep0in \topsep2pt \parsep0.00in
  \partopsep=0pt \leftmargin\parindent}}%
  {\end{list}}
\newenvironment{myitemize2}%
  {\begin{list}{\labelitemi}{\itemsep6pt \topsep6pt \parsep0.00in
  \partopsep=0pt \leftmargin\parindent}}%
  {\end{list}}
\newenvironment{myitemize3}%
  {\begin{list}{\labelitemi}{\itemsep3pt \topsep3pt \parsep0.00in
  \partopsep=0pt \leftmargin\parindent}}%
  {\end{list}}
  {\begin{list}{\labelitemi}{\itemsep0in \topsep2pt \parsep0.00in
  \partopsep=0pt \leftmargin\parindent}}%
  {\end{list}}
\newenvironment{myitemize5}%
  {\begin{list}{\labelitemi}{\itemsep3pt \topsep3pt \parsep0.00in
  \partopsep=0pt \leftmargin1em}}%
  {\end{list}}
\theoremstyle{definition}
\newtheorem{definition}{Definition}[section]
\theoremstyle{lemma}
\newtheorem{lemma}{Lemma}[section]
\theoremstyle{theorem}
\newtheorem{theorem}{Theorem}[section]
\theoremstyle{corollary}
\theoremstyle{conjecture}
\def\emparagraph#1{\vspace{0.25em}\noindent{\bf #1}}
\def\discretionaryslash{\discretionary{/}{}{/}}
{\catcode`\/\active
\gdef\URLprepare{\catcode`\/\active\let/\discretionaryslash
        \def~{\char`\~}}}%
\def\URL{\bgroup\URLprepare\realURL}%
\def\realURL#1{\tt #1\egroup}%
\begin{document}

\newcommand{\supsyml}[1]{\raisebox{4pt}{\footnotesize #1}}
\newcommand{\rstar}{\supsyml{$\ast$}\xspace}
\newcommand{\rdag}{\supsyml{$\ast\ast$}\xspace}

\author{
\fontsize{10.75}{12}\selectfont 
          Muqsit Nawaz$^{\ast}$\quad
          Aditya Gulati$^{\ast\dag}$\quad
          Kunlong Liu$^{\ast}$\quad
          Vishwajeet Agrawal$^{\ddag}$\quad
          Prabhanjan Ananth$^{\ast}$\quad
          Trinabh Gupta$^{\ast}$\quad \\[6pt]
\fontsize{9.5}{11}\selectfont 
$^{\ast}$UCSB
\quad\quad $^{\dag}$IIT Kanpur
\quad\quad $^{\ddag}$IIT Delhi
}

\date{}

\title{\textbf{Accelerating 2PC-based ML with Limited Trusted Hardware}}
\maketitle

\begin{abstract}
This paper describes the design, implementation, and evaluation of \sys, a
system that 
    allows two non-colluding cloud providers to run machine learning (ML)
        inference without knowing the inputs to inference.
Prior work for this problem mostly relies on advanced cryptography such as
two-party secure computation (2PC) protocols that provide rigorous guarantees but
suffer from high resource overhead. 
\Sys improves efficiency via a new 2PC
protocol that (i) tailors recent primitives such as function and
homomorphic secret sharing to ML inference, and (ii) uses trusted hardware in a limited capacity to bootstrap
the protocol. At the same time, \sys reduces trust assumptions on trusted
hardware by running a small code inside the hardware, restricting its use to a
preprocessing step, and distributing trust over heterogeneous trusted
hardware platforms from different vendors. An implementation and evaluation of
\sys demonstrates 
that its
\cpu and network overhead
converted to a dollar amount is 5.4--385$\times$ lower than state-of-the-art 
2PC-based works. Besides, \sys's
trusted computing base (code inside trusted hardware)
is only 1,300 lines of code, which is 14.6--29.2$\times$ lower than
the code-size in prior trusted hardware-based works.
\end{abstract}
\section{Introduction}
\label{s:intro}
\label{s:introduction}

How can a machine learning (ML) system running in the cloud perform inference 
without getting access to the inputs to inference 
(model parameters and the data points
whose class is being inferred)?

This question is motivated by a fundamental tension 
    between ease-of-use and confidentiality of user data.
On the one hand, cloud providers expose
easy-to-use ML APIs~\cite{googlemlpredictionapi, amazonml, azureml}.
A user can call them with model parameters and input data points, and receive 
inference results while treating ML as a black-box.
Furthermore, the user does not have to provision and manage ML systems locally. 
On the other hand, ML APIs require inputs in plaintext.  
Thus, a user's sensitive or proprietary model parameters and data points can be
    accessed 
    by rogue system
    administrators at the cloud provider~\cite{roguesteffan, roguegoogle, roguetechspot}, hackers who can
    get into the cloud provider's infrastructure~\cite{cloudhopper, cyber20}, and
    government agencies~\cite{propublica, googletransparency,
    amazontransparency, microsofttransparency}.

Given the wide array of ML applications, 
the need to balance the benefits and
    confidentiality-risks of cloud-hosted ML services has received significant attention
    (\S\ref{s:relwork}).
A long line of work
relies
    on cryptographic techniques~\cite{riazi2018chameleon, wagh2019securenn,
kumar2020cryptflow, wagh2020falcon, mohassel2018aby, mohassel2017secureml,
    riazi2019xonn,ball2019garbled,liu2017oblivious,
    juvekar2018gazelle,rouhani2018deepsecure,mishra2020delphi, chen2019secure,
    gilad2016cryptonets, xie2014crypto, hesamifard2017cryptodl,
    chabanne2017privacy,boemer2018ngraph, badawi2018alexnet,brutzkus2019low,
    chou2018faster,bourse2018fast,lou2019glyph, jiang2018secure}.
These works 
provide rigorous 
    guarantees but incur high resource overhead (\cpu consumption, network
    transfers, etc.). 
For example, for a single inference over the ResNet-32
model~\cite{he2016deep}, 
    state-of-the-art systems that run over two non-colluding cloud
providers~\cite{mohassel2017secureml,keller2018overdrive}   
    make over 6~GB of expensive, wide-area network transfers
(\S\ref{s:eval:overheadinference}).

In a quest to avoid expensive cryptography, 
researchers have resorted to using trusted execution environments (TEEs). 
A TEE
consists of a secure container 
that 
can execute a program 
such that an external entity peeking inside the container
learns only the input-output behavior of the computation. Secure systems developed using TEEs not only are less complex (and hence, easier to build) but also offer
significant efficiency benefits 
over their counterparts built only using cryptography. Indeed,
there are many highly efficient systems developed over the years~\cite{ohrimenko2016oblivious, hynes2018efficient,
  tople2018privado, hunt2018ryoan, hunt2018chiron,
tramer2018slalom, narra2019privacy, fischer2020computation}. For instance, the
\textsc{PRIVADO} system for ML inference 
    incurs less than 20\% overhead relative to
a non-private system~\cite{tople2018privado}.

However, TEEs are a not a panacea for building secure systems. Over the years, researchers have discovered various cryptanalytic attacks. First, the code inside a TEE can leak sensitive data through 
bugs and digital side-channels~\cite{moghimi2017cachezoom,gotzfried2017cache,
costan2016intel, van2018foreshadow,
lee2017inferring, xu2018controlled,
brasser2017software}, although these leaks can be 
mitigated by making code
data-oblivious~\cite{ohrimenko2016oblivious,hynes2018efficient,
tople2018privado, rane2015raccoon}, and
formally proving the absence of digital side-channels and bugs~\cite{bond2017vale}. 
Second, TEEs can leak data through analog side-channels
such as power draw, and physical side-channels such as bus
tapping~\cite{genkin2015get, genkin2015stealing,
kocher1999differential}.
Third, current systems use a single TEE which opens up the possibility
that the hardware designer or someone along the supply chain injects a
backdoor into the hardware~\cite{walen2016intel, bighackbloomberg,
ciscobackdoor, militarybackdoor}. 
Thus, current TEE-based systems also rely on the assumption that the
TEE vendor is trustworthy.

We introduce \sys, a new two-server ML inference system in the honest-but-curious
model. \Sys uses cryptography but runs a piece of functionality inside TEEs
to remove some weight from cryptography, giving a
substantially less expensive system that solely using cryptography. 
The use of 
TEEs does create a trusted computing base (TCB), consisting of both 
    the code that runs inside the TEEs, and
the hardware design and implementation of the TEE itself.
However, \sys lowers the size of the TCB, by (i) reducing
the size of the functionality running inside the TEE, (ii) 
distributing trust over heterogeneous TEEs from different vendors such
that the confidentiality of the system is preserved even if a TEE is
compromised. 

\Sys performs (i) and (ii) above for reducing TCB size in two progressive design steps
that we call \emph{reducing-TEE-code} and \emph{distributing-trust}. 

\emparagraph{Techniques for reducing-TEE-code step.} 
\Sys starts by observing that ML inference computation for many types of models,
particularly,  neural networks, can be expressed as a 
    series of layers, where each layer performs \changebars{either a linear computation (a
vector-matrix product) or a non-linear computation (an activation function such
as Rectified Linear Unit or ReLU), or both}{a linear computation (a
vector-matrix product) and a non-linear computation (an activation function such
as Rectified Linear Unit or ReLU)} (\S\ref{s:problem}). 
    Thus, a secure solution for ML inference requires sub-protocols for linear
and non-linear computations.

A common way to perform these computations over two servers is to use the Beaver
multiplication protocol to compute vector-matrix
products~\cite{beaver1991efficient} and Yao's
garbled circuit protocol~\cite{yao1982protocols} to compute non-linear functions. Further, Beaver's
protocol requires the two servers to hold correlated randomness called Beaver
triple shares, which is typically generated using additive homomorphic
encryption~\cite{mohassel2017secureml, keller2018overdrive}. A challenge with existing protocols is that both Beaver triple generation 
    and Yao's protocol 
incur significant expense. For instance, Yao's protocol requires transferring 
a verbose Boolean circuit representation of the non-linear function between the
servers.

\Sys makes two changes to this protocol.
First, instead of using additive homomorphic encryption to generate Beaver
triple shares, it uses a new protocol based on homomorphic secret sharing or
HSS~\cite{boyle2016breaking,boyle2019homomorphic,boyle2018foundations}
(\S\ref{s:triplegeneration}).
This protocol contains a packing technique that optimally uses the input space
of HSS operations, thereby reducing overhead relative to additive 
    homomorphic encryption-based solutions. For instance, network overhead in \sys's
protocol for a vector-matrix product over a matrix with $1024 \times 1024$
entries is at least 7.6$\times$ lower relative to prior work
(\S\ref{s:eval:overheadlinear}).

Second, 
    \sys replaces Yao's general-purpose protocol with a 
    recent protocol of Boyle, Gilboa, and Ishai (BGI)~\cite{boyle2019secure}
        that is efficient for computing non-linear functions.
    This protocol consumes slightly more \cpu than Yao, but incurs significantly lower
        network overhead (for instance, by 460$\times$ 
        for the ReLU function; \S\ref{s:eval:overheadnonlinear}).
    
The BGI protocol is promising; however, applying it to ML inference creates two
issues. First, the protocol, as described in the literature can efficiently encode
    the ReLU function (and approximations of Sigmoid and
Tanh) but not the MaxPool and Argmax functions~\cite{boyle2019secure,
ryffel2020ariann} (\S\ref{s:beaver-and-bgi}). The core issue is that the BGI protocol 
        depends on the function secret sharing (FSS)
primitive~\cite{boyle2015function, boyle2016function}, whose
        current constructions exist only for two functions, a point 
        function and an interval functions, that do not naturally
        encode the max function. The second issue with the BGI protocol is that it assumes
        that the two servers hold correlated randomness: shares
            of keys for FSS.

    \Sys fixes the first issue via new encodings of MaxPool and Argmax atop
    point and interval functions (\S\ref{s:beaver-and-bgi}). 
    These encodings may be of independent interest.
    \sys fixes the second issue by generating the FSS keys inside a 
        TEE per server.
    Since all non-linear functions further call just the point and interval
        functions, the code for FSS key generation is small.

\emparagraph{Techniques for distributing-trust step.}
    The protocol so far is efficient but contains TEEs as a single point
        of attack. In the distributing-trust step, 
        \sys takes FSS key generation
            from inside the TEEs and distributes it over \emph{multiple,
                heterogeneous} TEEs from different vendors.
    In particular, \sys replaces one TEE per server with three TEEs per server
        and runs a three-party secure computation protocol (3PC) over the TEEs
            such that an adversary does not learn the FSS keys even if
                it corrupts one of the TEEs.

General-purpose 3PC protocols can be expensive. 
    However, \sys again notes that
        all ML non-linear functions can be encoded
    on top of the point and interval functions. 
    So it devises a new customized 3PC protocol for the limited
    functionality of generating FSS keys for point and interval functions. 
        \sys's customized protocol is 
            cheaper, for instance, by 
            30$\times$ in terms of network transfers, relative to a general
        solution (\S\ref{s:fsskeygeneration}).

\emparagraph{Evaluation results.}
    We have implemented (\S\ref{s:impl}) and evaluated (\S\ref{s:eval}) 
        a prototype of \sys. Our prototype runs
            over two cloud providers, Microsoft Azure and 
        Amazon AWS, with multiple TEE machines per provider. 
    Our prototype demonstrates two properties of \sys.
        First, its code inside the TEE is 14.6--29.2$\times$
            smaller relative to existing single TEE-based systems (in absolute
                 terms, it is less than 1,300 lines of code). 
        Second, for several ML models
            including a 32-layer ResNet-32~\cite{he2016deep}, and for several datasets including
                those for speech and image recognition, 
    \sys's dollar cost to perform inference (that is, \cpu and network
consumption converted to a dollar amount) is
                5.4--385$\times$ lower than prior state-of-the-art
            cryptography-based works that run over two non-colluding servers.
\section{Overview of \sys}
\label{s:background-and-overview}

\subsection{Private outsourced ML inference}
\label{s:problem}

        \Sys targets the problem of private outsourced ML inference. This
        problem revolves
        around three parties: an \emph{ML model owner}, a \emph{service provider}, 
        and a \emph{data point owner}.
	    The model owner trains a model and deploys it 
                at the service provider,
            whose task is 
            to label new data points supplied by the data point owner against the model, for example,
        tell whether an image contains a human or not.
        The privacy aspect of the problem requires that 
        (a)~the service provider must not learn the model parameters or the data
        points, (b)~the model owner must not learn the data points, and (c)~the
        data point owner must learn only the inference result.
\begin{figure}[t!]
  \centering{\includegraphics[width=3.35in]{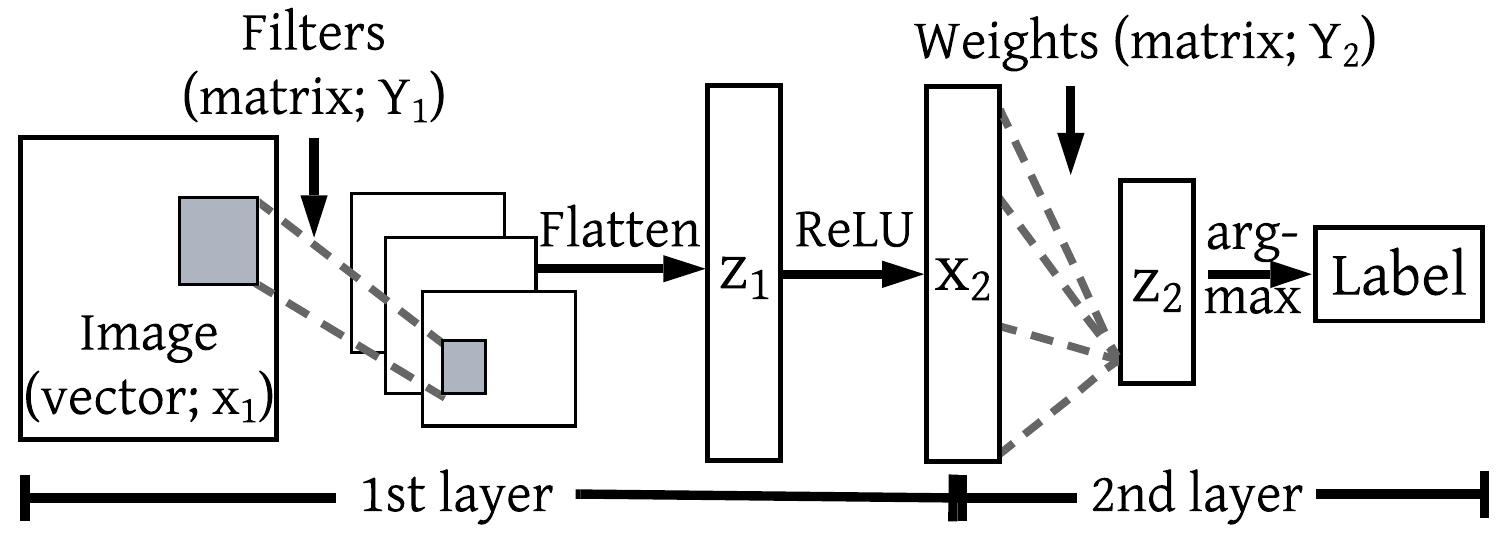}}
\caption{An example CNN with two layers. Each
layer computes a vector-matrix product (e.g., $\myvec{z}_1 = \myvec{x}_1
\cdot \mymatrix{Y}_1$) and applies a non-linear function (e.g., ReLU).}
\label{f:cnn}
\label{fig:cnn}
\end{figure}

While many types of ML models exist, \Sys focuses on neural
networks~\cite{haykin1994neural,goodfellow2016deep}, in particular, feedforward neural networks
(FNNs) and convolutional neural networks (CNNs), for two reasons. First, FNNs and CNNs have a wide
array of applications, from speech recognition~\cite{abdel2014convolutional}, to computer
vision~\cite{krizhevsky2012imagenet}, to chemical analysis~\cite{svozil1997introduction}.  Second,
one can express inference for other models such as support vector machines, Naive Bayes, and
regression as inference over FNNs~\cite{mohassel2017secureml}.

Fundamentally, FNNs and CNNs rely on slightly different building blocks. For instance, the former
employs dense layers while the latter additionally uses convolutions. However, one can abstract
inference for both types of models into a common structure. This structure is a series of
\emph{layers}, where each layer computes a vector-matrix product and applies a non-linear function such as ReLU (Rectified Linear Unit), Sigmoid, MaxPool, Argmax, and Tanh to the
product~\cite{goodfellow2016deep}. Figure~\ref{f:cnn} illustrates the structure of an example CNN.

\begin{figure}[t!]
  \centering{\includegraphics[width=3.35in]{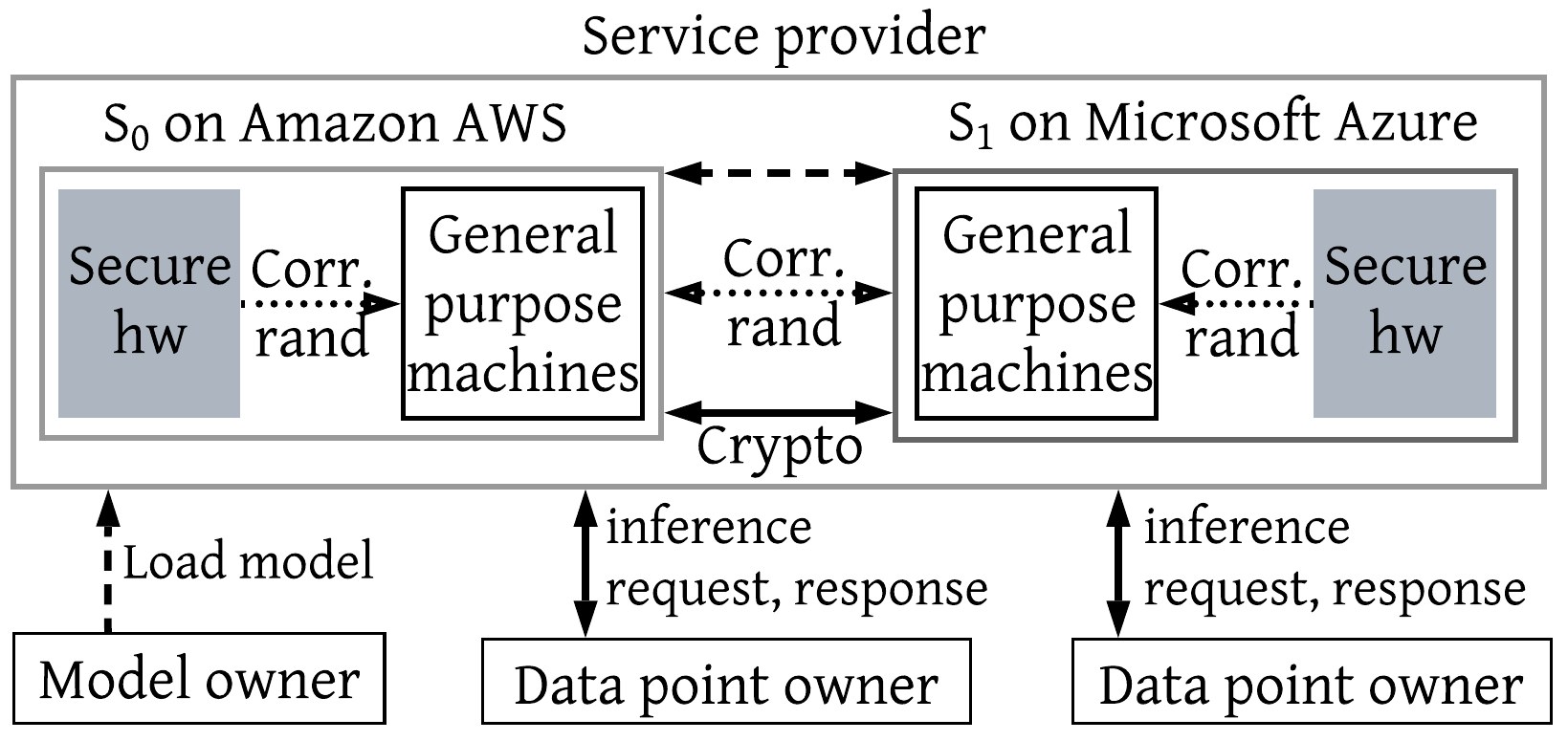}}
\caption{\Sys's high-level architecture.}
\label{f:arch}
\label{fig:arch}
\end{figure}

\subsection{Architecture}
\label{s:arch}

Figure~\ref{fig:arch} shows \sys's 
architecture. 
\Sys consists of a service provider, and multiple model and data point
owners. The service provider runs two servers, $S_0$ and $S_1$, in separate administrative domains
such as Microsoft Azure and Amazon AWS. Each server contains TEE machines from different vendors (labeled collectively as ``secure hw'' in
the figure), and general-purpose compute machines (labeled collectively as ``general-purpose
machines'' in the figure). 

 At a high level, \sys's protocol to privately outsource inference
        has four phases: \emph{setup},  
            \emph{model-loading},
            \emph{preprocessing}, and 
            \emph{online}.

\begin{myitemize2}

\item {\bf Setup}: The setup phase (not depicted in Figure~\ref{f:arch}) runs
once between \changebars{the}{} \sys's two
servers. 
During setup, the servers generate long-lived cryptographic material such
as seeds for a pseudorandom number generator. This cryptographic material is reused across all
inference requests.
                
\item {\bf Model-loading}: This phase (dashed arrows in Fig.~\ref{f:arch}) runs once per model. In
this phase, the model owner uploads \emph{secret-shares} (over a field) of the model parameters to
the two servers, who transform and store them. 
We denote $sh_b^{MP}$ as the share given by model owner to $S_b$, for $b \in \{0,1\}$.

\item {\bf Preprocessing}: The preprocessing phase runs once per inference, and precedes the online
phase. In the preprocessing phase (depicted by dotted arrows in
Figure~\ref{f:arch}), \sys's servers
generate \emph{correlated randomness} (depicted as ``corr rand'' in the figure). This
correlated randomness does not depend on the values of the model parameters or
the data points. This phase uses the TEE machines. 

        \item {\bf Online}: The online phase (depicted by solid arrows in Figure~\ref{f:arch}) runs
            once per inference and is input-dependent. In this phase, the data point owner uploads
            secret-shares (over a field) of its data point to the two servers (we denote
            the share $sh_b^{DP}$ as the share given to $S_b$), who run a cryptographic protocol
            using these shares and the outputs from the other phases, and generate shares of the
            inference label. Finally, each server sends
        its share of the label to the data point owner, who 
        combines the shares to get the actual label.

        \end{myitemize2}

\begin{definition}[Correctness]
For every input $sh_b^{MP} \in \{0,1\}^{\mathrm{poly}(\lambda)}, sh_b^{DP} \in
\{0,1\}^{\mathrm{poly}(\lambda)}$ (corresponding to 
    a data-point $DP$ sent by the
data-point owner, and
additive secret-shares of
model parameters $MP = \{\mymatrix{Y}_{0}, \mymatrix{Y}_{1},..., \mymatrix{Y}_{L
- 1}\}$ for a $L$-layer model sent by the model-owner)
to $S_b$, the reconstruction of the secret-shares output by
both servers $S_0, S_1$ equals the prediction output of applying the 
model with parameters $MP$ to input $DP$.
\end{definition}

\subsection{Threat model and security definitions}
\label{s:threat-model}
\label{s:security-definition}
\label{s:security-definitions}
\Sys considers an honest-but-curious adversary. This adversary 
follows the description of the protocol but tries to infer 
sensitive data by inspecting protocol messages. 
Below, we formally define \sys's security notion.

We consider two settings, namely \emph{single-TEE} and \emph{multiple-TEE}, 
depending on how many TEEs \sys's servers
employ. 
In the
\emph{single-TEE} setting, \sys's servers use one TEE each.\footnote{The TEE is
logically centralized but may be distributed over many physical TEE
machines of the same type.} We denote
the TEE used by $S_b$ as $T_b$, for $b \in \{0,1\}$. Further, we denote 
the functionality implemented by $T_b$ as $\func_b$. Finally, we denote non-TEE
machines at $S_b$ collectively as $M_b$.
In the {\em multiple-TEE} setting, each server
uses three types of TEEs (made by three different vendors); we denote
$S_b$'s three types of TEEs by $T_b^{(i)}$ for $i \in \{0, 1, 2\}$ such that
$i$-th TEE $T_b^{(i)}$ implements functionality $\func_b^{(i)}$. We define
security for the two settings separately.

\begin{definition}[Single-TEE security]
\label{def:single-tee}
A single-TEE \sys scheme consisting of setup, model-loading, preprocessing, and
online phases is said to be $\varepsilon$-secure if for any honest-but-curious
(passive) probabilistic polynomial time (PPT) adversary $\adversary$ corrupting
$M_b$ for $b \in \{0,1\}$ with access to TEE $T_b$ implementing $\func_b$, for
every large enough security parameter $\lambda$, there exists a PPT simulator
$\simr$ such that the following holds: 

for all inputs $sh_b^{MP} \in \{0,1\}^{\mathrm{poly}(\lambda)},sh_b^{DP} \in
\{0,1\}^{\mathrm{poly}(\lambda)}$ to $S_b$, randomness $r_b \in
\{0,1\}^{\mathrm{poly}(\lambda)}$, 
$$\{\view_{\adversary}^{\func_b}(1^{\lambda},sh_{b}^{MP},sh_b^{DP};r_b) \}
\approx_{c,\varepsilon}$$ $$\{\simr(1^{\lambda},sh_b^{MP},sh_b^{DP},r_b)\}.$$ 

If $\varepsilon$ is negligible in the security parameter, we drop
$\varepsilon$ in the above definition.  
\end{definition}

\begin{definition}[Multiple-TEE security]
\label{def:multiple-tee}
A multiple-TEE \sys scheme consisting of setup, model-loading, preprocessing,
and online phases is said to be $\varepsilon$-secure if, for any
honest-but-curious (passive) probabilistic polynomial time (PPT) adversary
$\adversary$ corrupting $M_b, T_b^{(i)}, T_{1 - b}^{(i)}$ for $b \in \{0,1\}, i
\in \{0, 1, 2\}$, with access to TEEs $T_b^{(j)}, T_b^{(k)}$ implementing
$\func_b^{(j)}, \func_b^{(k)}$ respectively, for $j,k \in \{0,1,2\}$ and $j \neq
i,k \neq i$, for every large enough security parameter $\lambda$, there exists a
PPT simulator $\simr$ such that the following holds:

for all inputs $sh_b^{MP} \in \{0,1\}^{\mathrm{poly}(\lambda)},sh_b^{DP} \in
\{0,1\}^{\mathrm{poly}(\lambda)}$ to $S_b$, randomness $r_b \in
\{0,1\}^{\mathrm{poly}(\lambda)}$, 
$$\{\view_{\adversary}^{\func_b^{(j)},\func_b^{(k)}}(1^{\lambda},sh_{b}^{MP},sh_b^{DP};
r_b)\} \approx_{c,\varepsilon}$$
$$\{\simr(1^{\lambda},sh_b^{MP},sh_b^{DP},r_b)\}.$$ 

If $\varepsilon$ is negligible in the security parameter, we drop
$\varepsilon$ in the above definition.

\end{definition}

\emparagraph{Remark.} We assume that TEEs $T_b^{(i)}, T_{1 - b}^{(i)}$ come from the same manufacturer.

\noindent We do not consider attacks such as membership
inference~\cite{shokri2017membership} and
model stealing~\cite{tramer2016stealing} that aim to infer membership in
training dataset or learn approximate
model parameters by observing the black-box
behavior of the ML inference system. Although this leakage is an important concern,
secure computation alone cannot prevent it. However, defending against such
attacks is an active area of
research~\cite{juuti2019prada,jia2019memguard,orekondy2019prediction}. 
    Besides, 
these attacks are 
immaterial when the entity receiving inference outputs also owns
the model (that is, when a model owner remotely deploys a model for its own
consumption).

\subsection{Prior approaches and related work}
\label{s:relwork}
\label{s:approaches}

       Several approaches exist in the literature for privately outsourcing the 
        task of inference over FNNs and CNNs. 
       Here, we compare \sys with 
                these prior approaches.
        While doing the comparison, we include prior works for a restricted
            setting 
            where the service provider has access to model parameters in
            plaintext,  
        as the techniques developed for this
            restricted setting are related to the techniques in \sys's
                fully-outsourced setting that also hides model
parameters. 

        One can split prior works 
        into two broad categories: those that rely on TEEs for their security
guarantees and those that rely only on cryptography.

        \emparagraph{TEE-based works.} 
        The works based on TEEs use the popular Intel SGX TEE~\cite{ohrimenko2016oblivious, hynes2018efficient,
        tople2018privado, hunt2018ryoan, hunt2018chiron,
            tramer2018slalom, narra2019privacy,fischer2020computation}.
        Many of these works~\cite{ohrimenko2016oblivious, hynes2018efficient,
        tople2018privado, hunt2018ryoan, hunt2018chiron}
            run a complete ML system inside
            the TEE. This approach is efficient as the code
            runs natively on the \cpu. However, 
        as indicated earlier (\S\ref{s:introduction}),
        systems based on a single, general-purpose TEE are
            vulnerable to many attacks.

        Slalom~\cite{tramer2018slalom}, Origami~\cite{narra2019privacy},
and DFAuth~\cite{fischer2020computation} also use Intel SGX, but, like \sys,
        move parts of inference outside of the TEE.
        However, these prior systems use the TEE during the online phase of inference,
            while \sys restricts TEE use to a preprocessing phase. Moreover, \sys 
            removes TEE as a single point of failure by securely
            distributing trust over heterogeneous TEEs.
        (Note that, unlike \sys, Slalom and Origami do not hide model parameters from the
            service provider.)

        \emparagraph{Cryptography-based works.} The alternative approach to using TEEs 
            is to use cryptographic constructs. In particular, a long line of works focuses on 
        building secure ML inference either from secure multiparty computation
(MPC)~\cite{riazi2018chameleon,
wagh2019securenn, kumar2020cryptflow, wagh2020falcon, mohassel2018aby,
        mohassel2017secureml, riazi2019xonn,ball2019garbled,liu2017oblivious,
                        juvekar2018gazelle,rouhani2018deepsecure,mishra2020delphi,
                    chen2019secure}, or fully homomorphic encryption (FHE)~\cite{gilad2016cryptonets,
xie2014crypto, hesamifard2017cryptodl, chabanne2017privacy,boemer2018ngraph,
badawi2018alexnet,brutzkus2019low, chou2018faster,
bourse2018fast,lou2019glyph, jiang2018secure,ryffel2020ariann}. 
        However, all these works incur higher overhead in comparison with TEE-based solutions.  
       For instance, a recent state-of-the-art system,
Glyph~\cite{lou2019glyph}, based on FHE, requires 
            $2^n$
                homomorphic operations for a non-linear function over a $n$-bit input. 
    \Sys focuses on the two-server secure computation (2PC)
setting; its \cpu and network overhead, when converted to dollars, is
5.4-385$\times$ lower than prior 2PC works
for this setting (\S\ref{s:eval:overheadinference}). 
    One can say that \sys's use of TEEs helps accelerate cryptography.

\subsection{Design approach}
\label{s:design-steps}

\begin{figure}[t!]
  \centering{\includegraphics[width=3.35in]{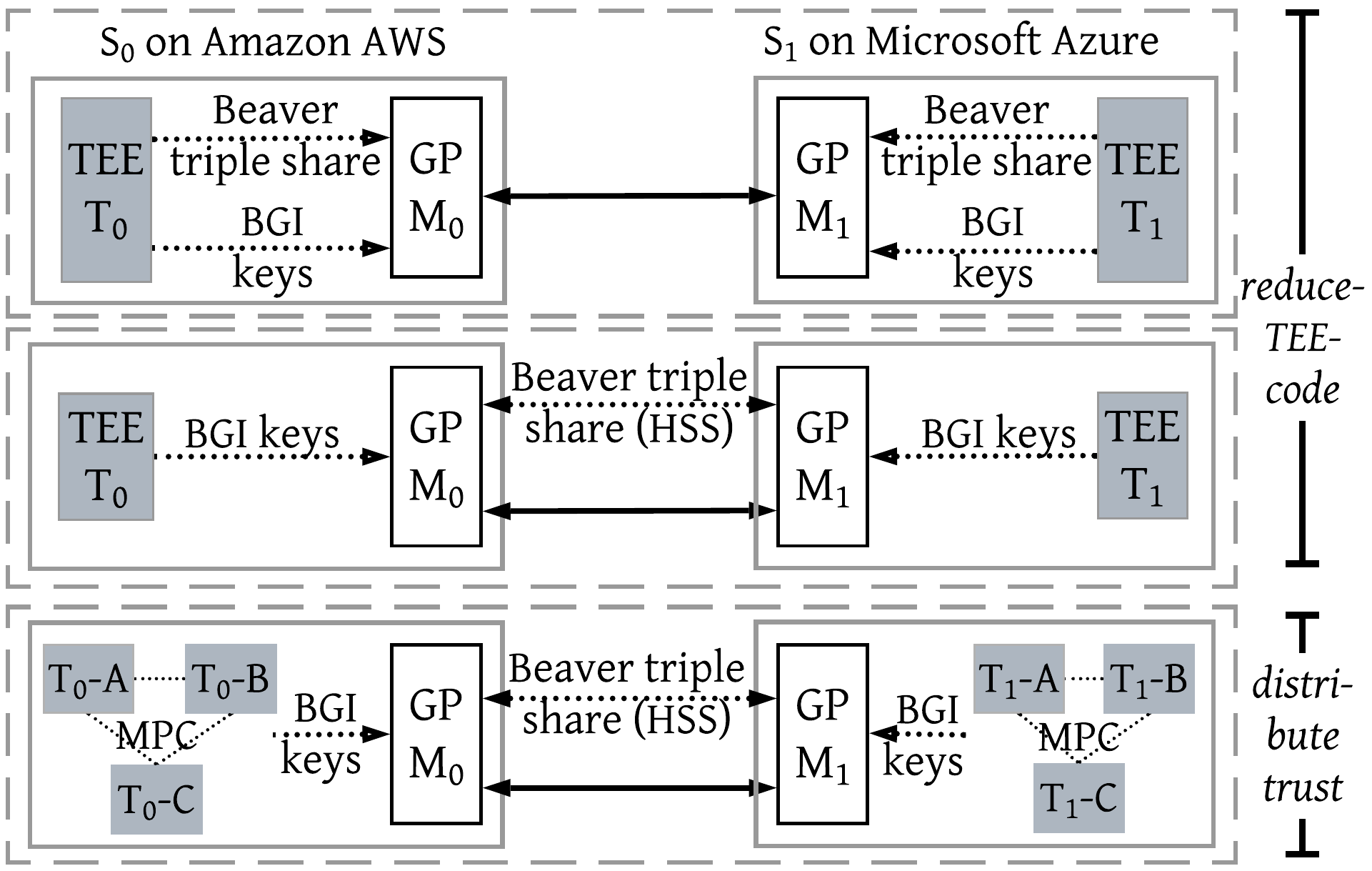}}
\caption{\Sys's design steps.
Dotted and solid arrows respectively show computation performed during
preprocessing and online phases of inference.}
\label{f:design-steps}
\label{fig:design-steps}
\end{figure}

As stated in the introduction (\S\ref{s:introduction}), \sys adopts the
two-step approach of {\em reducing-TEE-code} and {\em distributing-trust} for its
design.
Figure~\ref{f:design-steps} depicts these two steps.

At a high level, \sys starts with a solution that runs ML inference (all four
phases) 
inside a single TEE.
It then, gradually, via the reducing-TEE-code, moves most of the computation,
particularly, the frequently invoked preprocessing and online phases,
    from inside the TEE to outside the TEE. This step is further divided into two
sub-steps: the first sub-step completely gets rid of TEE in the online
    phase, and the second sub-step splits the preprocessing phase such that the
        bulk of preprocessing also happens outside the TEE. As a result, \sys greatly
simplifies the computation being performed inside the TEE. Finally, to avoid a single point
of attack, \sys employs a secure computation protocol over the
computation inside the
TEEs to distribute
        trust
among multiple, heterogeneous TEEs.
                
\emparagraph{First part of reducing-TEE-code: online phase without TEE.}
        In this first sub-step of reducing-TEE-code (first row in
Figure~\ref{f:design-steps}), 
            the TEEs at the two servers run 
            the 
            complete
            preprocessing phase.
            In particular, they generate two types of 
        input-independent correlated randomness: \emph{Beaver triple
shares}~\cite{beaver1991efficient} and 
            keys for a cryptographic protocol due to Boyle, Gilboa, and
Ishai (BGI)~\cite{boyle2019secure}.
        Meanwhile, the non-TEE machines run the complete online phase (\S\ref{s:beaver-and-bgi}).
                
         \emparagraph{Second part of reducing-TEE-code: less hardware, more software
for the preprocessing phase.}
        In the second sub-step of reducing-TEE-code 
            (illustrated in the middle row in
                Figure~\ref{f:design-steps}),
        \sys moves 
            a major part of the preprocessing phase---the generation of Beaver
                triple shares---outside of TEEs. 
        To generate these shares efficiently, 
        the non-TEE machines at \sys's two servers run
            an optimized cryptographic protocol based on a recent
        primitive called homomorphic secret sharing or
\emph{HSS}~\cite{boyle2016breaking,boyle2019homomorphic,boyle2018foundations}
        (\S\ref{s:triplegeneration}).
        After the 
        second sub-step of reducing-TEE-code,
               like a special-purpose cryptoprocessor~\cite{4769, ibmhsm, arthur2015practical},
        the TEE machines
            run the specialized task of generating keys for the BGI protocol.
                
    \emparagraph{Distributing-trust.}
            \Sys's distributing-trust step (illustrated in the bottom row in
            Figure~\ref{f:design-steps})
            reduces trust on TEEs, by 
            distributing the task of generating keys for BGI
            onto multiple TEEs. 
        To distribute key-generation efficiently, \sys uses
            a new, customized three-party secure computation protocol that achieves lower overhead (both \cpu consumption and network
            transfers) than 
            a general-purpose protocol, by shifting the computation of a
               \changebars{pseudorandom generator (PRG)}{pseudorandom number generator} (which is the bulk of the
                    computation in the key generation procedure) outside of the
general-purpose protocol (\S\ref{s:fsskeygeneration}).

The next three sections (\S\ref{s:beaver-and-bgi}, \S\ref{s:triplegeneration},
\S\ref{s:fsskeygeneration}) dwell exhaustively on the details of these design steps.
\section{Details of first part of reducing-TEE-code}
\label{s:beaver-and-bgi}
\begin{figure*}[t]
\hrule
\medskip

\begin{center}
\textbf{\Sys's protocol for first part of its reducing-TEE-code step}
\vspace{-2mm}
\end{center}

\begin{myitemize5}

\item This protocol has two parties, $S_0$ and $S_1$. It computes 
            shares of $f(\myvec{x} \cdot \mymatrix{Y})$, where vector
$\myvec{x}$ is in 
        $\mathbb{Z}_p^{1 \times n}$, matrix $\mymatrix{Y}$ is in 
$\mathbb{Z}_p^{n \times m}$, and $f$ is a non-linear function. We denote $sh_b^{(\myvec{x})}$ and $sh_b^{(\mymatrix{Y})}$ to be $S_b$'s shares of $\myvec{x}$ and $\mymatrix{Y}$ respectively.  

\item The protocol assumes that $S_b$ has a TEE machine $T_b$ and a
general-purpose machine $M_b$. It also assumes several cryptographic
primitives, as described below.
\end{myitemize5}

\begin{center}
\vspace{-4mm}
\textit{Setup phase}
\vspace{-2mm}
\end{center}

        \begin{myenumerate4}
        \item \label{e:diffiehellman} $T_0, T_1$ 
            establish a common seed for a pseudorandom function
            using the Diffie-Hellman protocol~\cite{diffie1976new, chevalier2009optimal}.
        \end{myenumerate4}

\begin{center}
\vspace{-2mm}
\textit{Model-loading phase}
\vspace{-1mm}
\end{center}

        \begin{myenumerate4}
        \setcounter{enumi}{1}
        \item \label{e:sampleB} $T_b$ samples $\mymatrix{B} \in_R \mathbb{Z}_p^{n
        \times m}$ and outputs its share $sh_b^{(\mymatrix{B})}$ to $M_b$.

        \item \label{e:receivemodelparams} (Receive model parameters $\mymatrix{Y}$) $M_0$ and $M_1$ respectively receive
                $sh_0^{(\mymatrix{Y})}$ and  
                $sh_1^{(\mymatrix{Y})}$ from the model owner.

        \item \label{e:maskmodelparams} (Mask $\mymatrix{Y}$) $M_0$ and $M_1$
                obtain 
                    $\mymatrix{F} = \mymatrix{Y} - \mymatrix{B}$, which is a
                    masked version of $\mymatrix{Y}$. 
                To obtain $\mymatrix{F}$, $S_b$ 
            computes 
                $sh_b^{(\mymatrix{F})} = 
            sh_b^{(\mymatrix{Y})} -
        sh_b^{(\mymatrix{B})}$, sends $sh_b^{(\mymatrix{F})}$ to $S_{1-b}$,
receives $sh_{1-b}^{(\mymatrix{F})}$ from $S_{1-b}$,  
            and computes
            $\mymatrix{F} = 
            sh_0^{(\mymatrix{F})} + 
            sh_{1}^{(\mymatrix{F})}$.
        \end{myenumerate4}

\begin{center}
\vspace{-2mm}
\textit{Preprocessing phase}
\vspace{-1mm}
\end{center}

        \begin{myenumerate4}
        \setcounter{enumi}{4}
            \item (Generate Beaver triple shares)
\label{e:beavertriplegeneration}
           $T_b$ samples $\myvec{a} \in_R \mathbb{Z}_p^{1
            \times n}$ and computes $\myvec{c} = \myvec{a} \cdot \mymatrix{B}$.
            It
                gives
                the Beaver triple share ($sh_b^{\myvec{a}}, sh_b^{\mymatrix{B}},
            sh_b^{\myvec{c}}$) to $M_b$.

            \item (Generate FSS keys) \label{e:fsskeygeneration}
                $T_b$ samples $\myvec{r} \in_R \mathbb{Z}_p^{1 \times m}$ and
                            outputs its share $sh_b^{(\myvec{r})}$ to $M_b$.
                                $T_b$ also computes FSS keys, $k_0$ and $k_1$, such
                                        that
                                            $(k_0, k_1) \leftarrow \fssgen\left( 1^{\lambda}, \widehat{f}_{\myvec{r}} \right)$,
                                            where $\widehat{f}_{\myvec{r}}(\myvec{in}) =
                                            f(\myvec{in} - \myvec{r})$ is an offset
                                            function for $f$. $T_b$ outputs key $k_b$ to $M_b$.
        \end{myenumerate4}

\begin{center}
\vspace{-2mm}
\textit{Online phase}
\vspace{-2mm}
\end{center}

    \begin{myenumerate4}
        \setcounter{enumi}{6}
        \item \label{e:receivedatapoint}
        $M_b$ receives $sh_b^{(\myvec{x})}$ from the
            data point owner (or from the output of step~\ref{e:fssonline}). 

        \item (Beaver multiplication)  \label{e:beavermultiplication}
$M_b$ takes matrix $\mymatrix{F}$ from the model-loading phase, Beaver
            triple share ($sh_b^{\myvec{a}}, sh_b^{\mymatrix{B}},
            sh_b^{\myvec{c}}$) 
            from the preprocessing phase, $sh_b^{(\myvec{x})}$ from the above
step,
    and 
            performs Beaver
        multiplication~\cite{beaver1991efficient}. $M_b$ obtains the output $sh_b^{(\myvec{x} \cdot \mymatrix{Y})}$. 

        \item \label{e:fssonline} 
        (BGI evaluation) \label{e:fsseval} $M_b$ takes its share of 
            $\myvec{x} \cdot \mymatrix{Y}$ from the above step, and
                FSS key $k_b$ 
                and randomness $sh_b^{(\myvec{r})}$ 
                    from
                the preprocessing phase, 
                and outputs 
                    $sh_b^{(f(\myvec{x} \cdot
        \mymatrix{Y}))}$ 
                using the BGI protocol~\cite{boyle2019secure}. 
    \end{myenumerate4}
\hrule
\caption{This protocol composes the Beaver multiplication protocol for computing vector-matrix
products~\cite{beaver1991efficient} with the function secret sharing (FSS)-based 
    BGI protocol for computing non-linear
functions~\cite{boyle2019secure}. The two sub-protocols require correlated
randomness, which is generated using TEE machines during the preprocessing phase.}
\label{fig:beaver-and-bgi}
\label{f:beaver-and-bgi}
\end{figure*}

This section describes \sys's protocol for the first part of its
reducing-TEE-code step. 
To begin with, we focus on one layer of inference, that is, computing one
vector-matrix product and applying a non-linear function to the output
of the product (\S\ref{s:problem}); 
later in this section, we will relax this assumption. 

Figure~\ref{f:beaver-and-bgi} shows \sys's protocol for one layer of inference. 
Say that the vector is $\myvec{x} \in \mathbb{Z}_p^{1 \times n}$ and the matrix is
$\mymatrix{Y} \in \mathbb{Z}_p^{n \times m}$, then the protocol computes
$f(\myvec{x} \cdot \mymatrix{Y}) \in \mathbb{Z}_p^{1 \times m'}$, where $m' \leq
m$, and $f$ is
the non-linear function such as ReLU or MaxPool. The vector $\myvec{x}$
is the data point from the data point owner or the output of the previous
layer; the matrix $\mymatrix{Y}$ encodes model parameters. All arithmetic
is in the 
field 
$\mathbb{Z}_p$ for a prime $p$.

Underneath, the protocol composes Beaver's secure multiplication protocol~\cite{beaver1991efficient} with a 
    protocol due to Boyle, Gilboa, and
    Ishai (BGI)~\cite{boyle2019secure}. 
The Beaver part securely computes the vector-matrix product: it takes as input
the shares of the vector $\myvec{x}$ and matrix $\mymatrix{Y}$, and the shares of a
\emph{Beaver triple} 
($\myvec{a}, \mymatrix{B}, \myvec{c}$), and
generates shares of the vector-matrix product $\myvec{z} = \myvec{x} \cdot
\mymatrix{Y}$. 
    For an unfamiliar reader, the Beaver triple 
($\myvec{a}, \mymatrix{B}, \myvec{c}$) is a vector-matrix product over a random vector
and matrix. That is, 
    $\myvec{a}$ and $\mymatrix{B}$ are sampled uniformly at random with elements
in $\mathbb{Z}_p$, $\textrm{dim}(\myvec{a}) = \textrm{dim}(\myvec{x})$, 
    $\textrm{dim}(\myvec{B}) = \textrm{dim}(\myvec{Y})$, 
    and $\myvec{c} = \myvec{a} \cdot \myvec{B}$.

The BGI part of the protocol computes the non-linear
function: it starts with the shares of the vector-matrix product $\myvec{z}$, and the
shares of
the non-linear function $f$, 
    and computes the shares of the non-linear function applied to the product,
that is, shares of 
$f(\myvec{z})$.
A key enabler of the BGI protocol is the function secret sharing
(FSS) primitive (\fssgen,
\fsseval)~\cite{boyle2015function, boyle2016function}. \fssgen
splits a function $f$ into two secret shares $f_0$ and $f_1$, called FSS keys, such that 
$f_0(x) + f_1(x) = f(x)$. 
\fsseval evaluates a
share $f_b$, for $b \in \{0, 1\}$,  on an input $x$ to give a
share of $f(x)$ (this happens in step~\ref{e:fsseval} in
Figure~\ref{fig:beaver-and-bgi}).

The Beaver triples and FSS keys form input-independent correlated randomness. 
The protocol uses the TEE machines to generate this randomness during
the preprocessing phase.

\emparagraph{Supporting multiple layers of inference.}
The protocol above works for one layer of inference. To support
multiple layers, \sys replicates the computation inside each phase
(except the setup phase) as many times as the number of layers.
It then connects copies of the online phase for adjacent layers. 
Specifically, it feeds the output of step~\ref{e:fsseval}, which is a vector in
$\mathbb{Z}_p$, to step~\ref{e:receivedatapoint}, which expects a vector of the
same type.

\emparagraph{Lack of expressibility and fixes.} 
There are two issues with expressibility of the described protocol.
First, it assumes arithmetic over the field $\mathbb{Z}_p$, whereas 
neural networks perform arithmetic over floating-point numbers. 
\Sys addresses this issue by borrowing standard techniques from the literature
to encode floating-point arithmetic as field arithmetic~\cite{mishra2020delphi, mohassel2017secureml}. 
The conversion results in a drop in inference accuracy; however, this
drop is small (\S\ref{s:eval:overheadinference}).

The second issue with expressibility is that the BGI part of the protocol can
directly handle only certain ML non-linear functions.
The restriction is due to the fact that efficient FSS constructions
currently exist only for 
two functions: a point function
$f_{\alpha}^{\beta}$ that outputs $\beta$ at the point $\alpha$ and zero
otherwise, and the interval function 
$f_{(\alpha_1, \alpha_2)}^{\beta}(x)$ 
    that outputs $\beta$ if $\alpha_1 \leq x
\leq \alpha_2$ and zero otherwise. 
These functions can express a piece-wise polynomial (that is, a spline)
function~\cite{boyle2019secure}, which in turn can encode
the ReLU function and 
several close approximations~\cite{amin1997piecewise} of Sigmoid and
Tanh. However, a spline cannot directly encode the MaxPool and 
Argmax functions.

Normally, one would express a max over two values as $\textrm{max}(x, y)
= \textrm{sign}(x-y) \cdot (x-y) + y$, where the sign function (which is a
spline) returns 1 if its
input is positive and zero otherwise. 
However, this formulation
    of max does not work when the inputs $x, y$ are in 
$\mathbb{Z}_p$. For
example, 
consider the case where $x=5$, $y=3$, and $p = 7$.
For this
case, $y > x$ ($x \geq 4$ is considered negative) but
$\textrm{sign}(x-y)=1$ (+ve).
The problem is that 
$\mathbb{Z}_p$ 
(when it encodes both positive and negative numbers) is not 
a totally ordered set. 

There are many details to how \sys 
    encodes Maxpool and Argmax as a composition of point
    and interval functions; we leave these details to Appendices~\ref{a:maxpool}
and~\ref{a:argmax}.
However, \sys's key idea is to split the computation into two parts: when both 
    $x$ and $y$ have the same sign, and when they do not.
For the former case, that is, when both $x$ and
$y$ are either both positive or both
negative, 
    $\textrm{sign}(x-y)$ gives the right answer. 
Therefore, one can write $\textrm{max}(x,y)=\textrm{ReLU}(x-y) + y$. 
    For the case when $x$ and $y$ have different signs, one can write
$\textrm{max}(x,y)=\textrm{ReLU}(x)+\textrm{ReLU}(y)$. 
\Sys composes these two cases, again by using just point and interval 
    functions.

We note that Ryffel et al. in 
parallel work also encode MaxPool
and Argmax using point and interval functions~\cite{ryffel2020ariann}.
However, their protocol assumes a trusted third party (besides the two servers).
Furthermore, their encoding limits the inputs to a small subset of $\mathbb{Z}_p$, 
and incurs network overhead that is quadratic in the number of
input entries 
to MaxPool and Argmax. 
In contrast, \sys's encoding does not have an input restriction, and 
incurs network overhead linear in the number of input entries
to MaxPool and Argmax.

\emparagraph{Cost analysis.}
The cost of setup and model-loading phases in Figure~\ref{f:beaver-and-bgi} 
    gets amortized across inference
requests as model parameters typically change infrequently. Here, we discuss network and
\cpu costs for the preprocessing and online phases. 

\emph{Network overhead.}
In terms of network, the preprocessing phase requires the TEE machines to
transfer correlated randomness (Beaver triple shares and FSS keys) to
general-purpose machines. 
These data transfers are within a single administrative domain and cheap. 
Indeed, popular cloud providers do not charge for intra-domain
        transfers within a geographical zone~\cite{googlenetworkpricing,
azurenetworkpricing}. The online phase incurs inter-server (wide-area) network overhead
equal to the size of $\myvec{x}$ plus a small multiple of the size of 
$\myvec{z} = \myvec{x} \cdot \mymatrix{Y}$.
The first term is due to the 
Beaver part (step~\ref{e:beavermultiplication} in
Figure~\ref{fig:beaver-and-bgi}), while the second term is due to the BGI evaluation part (step~\ref{e:fsseval} in 
Figure~\ref{fig:beaver-and-bgi}). 
Note that, in contrast, prior work that uses 2PC between two servers
(\S\ref{s:relwork}) uses Yao's garbled
circuits~\cite{yao1982protocols} for non-linear functions, whose
network overhead is much higher---a multiple of the verbose Boolean circuit
representation of the non-linear function.
For instance, for ReLU, \sys's implementation of BGI costs
    18 bytes while a recent and optimized implementation of Yao~\cite{zahur2015obliv} costs
8.3~KB (\S\ref{s:eval:overheadnonlinear}).

\emph{\cpu overhead.} In terms of \cpu, the
Beaver part 
    computes vector-matrix products 
over small numbers in $\mathbb{Z}_p$ ($p$ is a 52-bit prime in
our implementation).
    Meanwhile, 
    the BGI part 
runs \fssgen and \fsseval over the point and interval functions. 
The \cpu for FSS procedures is higher than for Yao (for example, for ReLU,
1.3~ms versus
0.45~ms in Yao; \S\ref{s:eval:overheadlinear})
    as both \fssgen and \fsseval internally make
    many calls to AES (for example,
\fsseval for an interval function over a $p$-bit input performs $8 \cdot \log p$
AES encryptions). 
However, since \cpu is a much cheaper
resource than network consumption, the reduction in network
overhead outweighs the increase in \cpu.

\emparagraph{Security analysis.}
The protocol described in Figure~\ref{fig:beaver-and-bgi} satisfies the
single-TEE security definition in \S\ref{s:security-definitions}
(Appendix~\ref{a:single-tee-proof}).

\section{Details of second part of reducing-TEE-code}
\label{s:triplegeneration}

A limitation of the protocol in the previous section is the
high amount of computation performed by the TEE machines at the two servers
(steps~\ref{e:beavertriplegeneration} and~\ref{e:fsskeygeneration} in
Figure~\ref{fig:beaver-and-bgi}). In particular, the TEE machines generate FSS keys and 
Beaver triple shares.
Moreover, the latter requires a substantial amount of code inside the TEEs: 
    not only does the TEE compute vector-matrix products but 
        it also runs code to maintain state outside the TEE: 
        for each layer of every model, step~\ref{e:sampleB} in Figure~\ref{fig:beaver-and-bgi} samples
and stores a matrix $\mymatrix{B}$, and
step~\ref{e:beavertriplegeneration} reuses this state across inference requests
to generate triples. Therefore, \sys's second part of reducing-TEE-code step moves Beaver triple generation to general-purpose
(non-TEE) machines ($M_0$ and $M_1$). 

Observe that the first two components of a Beaver triple ($\myvec{a}, \myvec{B}, \myvec{c}$)
are sampled uniformly at random. Therefore, $M_b$
   can locally sample its shares
                            $sh_b^{(\myvec{a})}$ and
                                $sh_b^{(\myvec{B})}$ as the sums $\myvec{a} = 
                            sh_0^{(\myvec{a})} +
                            sh_1^{(\myvec{a})} \pmod p$ and 
                            $\mymatrix{B} = 
                            sh_0^{(\mymatrix{B})} +
                            sh_1^{(\mymatrix{B})} \pmod p$ 
                            are also uniformly random. To obtain shares of 
        $\myvec{c} = \myvec{a} \cdot \mymatrix{B}$
    from shares of $\myvec{a}$ and $\mymatrix{B}$, prior work offers several two-server 
protocols~\cite{mishra2020delphi, juvekar2018gazelle,
keller2018overdrive, keller2016mascot, mohassel2017secureml}.
However, these protocols incur a high inter-server (wide-area) network overhead.
    For instance, 
            for a vector with 128 entries and 
                a matrix with 128 $\times$ 128 entries,
        the network overhead
            of a additive homomorphic encryption-based 
        protocol used in the state-of-the-art prior
works~\cite{juvekar2018gazelle, mishra2020delphi, keller2018overdrive}
            is
        over $1{,}000$ times the size of the vector.

Instead of using prior homomorphic encryption-based protocols, \sys uses a new protocol 
based on a primitive called homomorphic 
secret sharing (HSS) that has received much attention
recently~\cite{boyle2016breaking, boyle2019homomorphic, boyle2018foundations,
fazio2017homomorphic}. \Sys's HSS-based 
    protocol significantly reduces (amortized) network overhead---for instance, to
$16\times$ the size of the vector for the specific example above.
However, obtaining this performance
requires addressing two challenges of applying HSS to Beaver triple
generation.
This 
section gives a necessary background on HSS, explains the challenges,
and describes \sys's protocol.

\subsection{Overview of Homomorphic secret sharing (HSS)}
\label{s:hss}

Homomorphic secret sharing or HSS~\cite{boyle2019homomorphic, boyle2016breaking,
boyle2018foundations} is a cryptographic
primitive that allows a client to outsource the computation of a program (containing addition and
multiplication instructions) to two non-colluding servers such that each server
produces its share of the program output without learning the original program
inputs.  

An HSS scheme has three procedures: $\hssgen$, $\hssenc$, and $\hsseval$.
To outsource a program $z = I(x, y, \ldots)$ over an input space
$\mathcal{I}$, a client first invokes 
$\hssgen$ to generate HSS keys. 
These keys consist of a public key, $pk$, for an underlying encryption scheme,
and the shares of the corresponding secret key, $(e_0 = sh_0^{(s)}, e_1 =
sh_1^{(s)})$. 
The client uses the public key to run
$\hssenc$ and produce
a set of ciphertexts, ${\bf C}$, containing encryptions of the program inputs $(x, y, \ldots)$. 
The client also produces two sets,
    ${\bf S_0} = \{sh_0^{(x \cdot s)}, sh_0^{(y \cdot s)}, \ldots\}$ 
        and 
    ${\bf S_1} = \{sh_1^{(x \cdot s)}, sh_1^{(y \cdot s)},
\ldots\}$,
                containing shares 
            of the program inputs times
    the secret key. 
The client sends 
$(pk, e_0, {\bf C}, {\bf S_0})$ to server $S_0$, and 
$(pk, e_1, {\bf C}, {\bf S_1})$ to server $S_1$.
Finally, server $S_b$ locally (without interaction with $S_{1-b}$) runs 
$\hsseval(e_b, {\bf C}, {\bf S_b}, I)$ 
and
gets its share of the program output $z$.

\Sys builds on \hssbkslprname = (\bkslprgen, \bkslprenc,
\bkslpreval)~\cite{boyle2019homomorphic} HSS scheme as it is the most efficient
HSS scheme
in the literature. There are three notable aspects of \hssbkslprname.
First, the input space $\mathcal{I}$ is the polynomial ring
$R_p = \mathbb{Z}_p[x]/(x^N+1)$ 
    consisting of all degree $N-1$ polynomials
   with coefficients in 
    $\mathbb{Z}_p$. 
Second, the underlying encryption scheme that \hssbkslprname uses is the LPR
scheme~\cite{lyubashevsky2010ideal} with plaintext space $R_p$.
Third, a key instruction that \bkslpreval supports is \emph{Mult}.
This instruction
takes as inputs 
a LPR ciphertext $C^x$ for $x \in R_p$, 
and a share of an input $y \in R_p$ times the LPR secret key, 
    that is, 
    a share of $y \cdot s$, and outputs
a share of the product $x \cdot y$.
That is, $sh_b^{(x \cdot y)} \gets \textrm{Mult}(sh_b^{(y \cdot s)}, C^{x})$.

\begin{figure*}[t]
\hrule
\medskip
{

\begin{center}
\textbf{\Sys's protocol after its reducing-TEE-code step}
\vspace{-3mm}
\end{center}

\begin{myitemize5}

\item This protocol assumes the same parties and performs the same computation as the
protocol in Figure~\ref{fig:beaver-and-bgi}. 

\end{myitemize5}

\begin{center}
\vspace{-3mm}
\textit{Setup phase}
\vspace{-3mm}
\end{center}

        \begin{myenumerate4}
        \item \label{e:hsskeygen} $M_0$ and $M_1$ use Yao's 
            garbled circuit protocol~\cite{yao1982protocols} to run $(pk, s)
            \gets \bkslprgen(1^{\lambda})$.
            Yao's protocol outputs ($pk, e_b = sh_b^{(s)}$) to $M_b$. Here, $s$ is a
            secret key for the LPR encryption scheme.

        \item The other step of setup is step~\ref{e:diffiehellman} from Figure~\ref{fig:beaver-and-bgi}.
        \end{myenumerate4}

\begin{center}
\vspace{-3mm}
\textit{Model-loading phase}
\vspace{-3mm}
\end{center}

        \begin{myenumerate4}
        \setcounter{enumi}{2}
        \item \label{e:generateB} $M_b$ samples 
            $sh_b^{(\mymatrix{B})}
            \in_R \mathbb{Z}_p^{n \times m}$. 

        \item \label{e:converttoshares} $M_0$ and $M_1$ use Yao's 
            protocol to convert shares of each column of
                  $\mymatrix{B}$, that is, 
                  $sh_b^{(\mymatrix{B}[i])}$ for $i \in \{1, \ldots, m\}$, to 
                  $sh_b^{(B[i] \cdot s)}$, where $B[i] \in R_p$ is the polynomial
                    encoding of the column vector $\myvec{B}[i]$. The polynomial
                    encoding is standard and based on an application of Chinese remainder
                    theorem (CRT) to ring $R_p$~\cite{boyle2019homomorphic}.

        \item Other steps of model-loading are steps~\ref{e:receivemodelparams}
            and~\ref{e:maskmodelparams} from Figure~\ref{fig:beaver-and-bgi}.
        \end{myenumerate4}

\begin{center}
\vspace{-3mm}
\textit{Preprocessing phase}
\vspace{-3mm}
\end{center}

        \begin{myenumerate4}
        \setcounter{enumi}{5}
        \item \label{e:triplegeneration} (Generate Beaver triple shares) $M_b$ does the following.
        \begin{myenumerate3}
        \item Samples 
                  $sh_b^{(\myvec{a})} \in_R \mathbb{Z}_p^{1 \times n}$ and converts it to its polynomial form
                  $sh_b^{(a)}$. 

        \item \label{e:converttoct} (Encrypts $a$) Sends $C^{sh_b^{(a)}} 
                    \gets  \lprenc(pk, sh_b^{(a)})$ to $M_{1-b}$, 
                    receives $C^{sh_{1-b}^{(a)}}$ from  $M_{1-b}$, 
            and computes $C^{a} = C^{sh_b^{(a)}} + C^{sh_{1-b}^{(a)}}$ using the
            additively homomorphic property of LPR.

       \item (Multiplies $a$ with $B[i]$) 
        \label{e:hssmult}
        For each $i \in \{1, \ldots, m\}$,
        computes $sh_b^{(B[i]  \cdot a)} =
        \textrm{Mult}(sh_b^{(B[i] \cdot s)}, C^{a})$ 
            using the HSS
        multiplication instruction. $M_b$ then converts $sh_b^{(B[i]  \cdot a)}$ to
            its vector form $sh_b^{(\myvec{B}[i]  \odot \myvec{a})}$, 
                    where
$\odot$ denotes component-wise multiplication. $M_b$ 
        computes $sh_b^{(\myvec{c}[i])} = \sum_{j=1}^{j=n} sh_b^{(\myvec{B}[i]
\odot \myvec{a})} [j]$.
        \end{myenumerate3}

        \item Finally, perform Step~\ref{e:fsskeygeneration}
            from Figure~\ref{fig:beaver-and-bgi}.

        \end{myenumerate4}

\begin{center}
\vspace{-3mm}
\textit{Online phase} is as in Figure~\ref{fig:beaver-and-bgi}
\vspace{-3mm}
\end{center}
}
\hrule
\caption{\Sys's protocol after its reducing-TEE-code design step. This protocol does not show
the packing optimization, which is illustrated separately in
Figure~\ref{f:packing}.}
\label{fig:protocol-after-outsourcing}
\label{f:protocol-after-outsourcing}
\end{figure*}

\subsection{Promise and perils of \hssbkslprname HSS}
\label{s:hssissues}

A key property of \hssbkslprname is that it allows a client to outsource
computation to two servers that do not interact with each other. 
However, as described, \hssbkslprname is
not suitable for Beaver triple generation, for two reasons. 
First, \hssbkslprname requires three parties where one of them supplies \hssbkslprname keys and encodings (encryptions and shares)
of program inputs. However, in \sys's setup, there are only two
parties---machines $M_0$ and $M_1$. They have shares of a vector $\myvec{a}$ and a matrix $\mymatrix{B}$, and
require shares of $\myvec{c} = \myvec{a} \cdot \mymatrix{B}$. Therefore, how should $M_0, M_1$ obtain 
(i) \hssbkslprname keys, (ii) 
        ciphertexts for input $\myvec{a}$, and (iii) 
            shares of $\mymatrix{B} \cdot s$?

Second, the dimension $N$ of the input space $R_p$ is 
large, for example, $2^{12}$ or $2^{13}$, to ensure the security of LPR ciphertexts~\cite{HomomorphicEncryptionSecurityStandard, lyubashevsky2010ideal}. But
oftentimes the vector length in ML models, denoted by $n$, 
is smaller than $N$. For instance, a CNN for the MNIST
dataset~\cite{lecun2010mnist} has vectors with 128
entries~\cite{liu2017oblivious}. When $n < N$, a mapping of vectors or matrix rows
(of a Beaver triple) to
degree $N-1$ polynomials in $R_p$ wastes space and incurs unnecessary \cpu and
network overhead (relative to a mapping that would not waste space in the
polynomial).\footnote{When $n > N$, the vector-matrix product is
split into smaller products akin to block-matrix multiplication. In this case,
the last product has a vector of size $n - \floor{n/N} \cdot N$, which is 
$\leq N$.}

\subsection{\Sys's protocol that incorporates \hssbkslprname HSS}
\label{s:hssfixes}

Figure~\ref{f:protocol-after-outsourcing} shows \sys's protocol for the second
part of its reducing-TEE-code step. This protocol incorporates \hssbkslprname to
generate Beaver triple shares, while addressing the aforementioned issues, as follows.

First, the protocol
adapts \hssbkslprname for two parties
by using the general-purpose Yao's garbled circuit
protocol~\cite{yao1982protocols} to simulate the client's role. Yao's
        protocol generates \hssbkslprname keys (step~\ref{e:hsskeygen} under
setup in Figure~\ref{f:protocol-after-outsourcing}) and shares of $\mymatrix{B} \cdot s$ to supply to the \hssbkslprname
    Mult instruction
(step~\ref{e:converttoshares} under model-loading in
Figure~\ref{f:protocol-after-outsourcing}).

Second, the protocol allows $M_0$ and $M_1$ to generate LPR ciphertexts for
\myvec{a} using the additively homomorphic properties of LPR.
In particular, $M_0$ and $M_1$ generate ciphertexts for shares of $\myvec{a}$,
exchange them, and add them to get a ciphertext for $\myvec{a}$
(step~\ref{e:converttoct} in Figure~\ref{f:protocol-after-outsourcing}). With this change and the one
above, the first challenge of applying \hssbkslprname to Beaver triple generation 
is addressed. 

\begin{figure}[t!]
\vspace{-1mm}
\centerline{\includegraphics[width=3.35in]{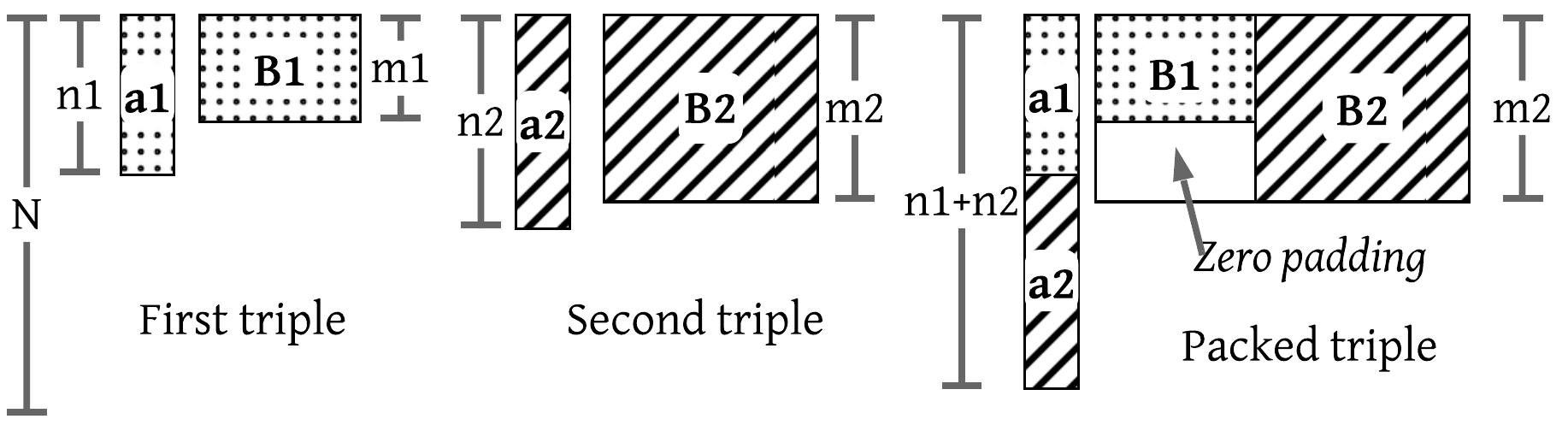}}
\caption{Packing scheme for triple generation.}
\label{f:packing}
\label{fig:packing}
\end{figure}

Third, the protocol 
    addresses the inefficiency caused by mapping small vectors in $\mathbb{Z}_p^n$ to
degree $N-1$ polynomials in $R_p$ by \emph{packing} 
        multiple smaller triples into an $N$-sized triple. Figure~\ref{fig:packing} depicts the overall idea. Say
that \sys needs to generate two triples 
                                    ($\myvec{a_1}, \mymatrix{B_1}, \myvec{c_1}$) 
and
                                    ($\myvec{a_2}, \mymatrix{B_2},
\myvec{c_2}$) for different layers of the same model, or different layers
across models, or different requests to the same layer of a model. Then, instead of running triple generation (step~\ref{e:triplegeneration} in
Figure~\ref{f:protocol-after-outsourcing}) separately for the two triples, 
    \sys runs a single instance of triple generation.

\emparagraph{Cost analysis.}
Relative to the protocol in Figure~\ref{f:beaver-and-bgi}, the cost of the setup and model-loading phases
increases because of the addition of Yao's protocol. However, Yao is used only
during setup and model-loading phases, and thus its cost gets amortized across
many inference requests, as setup runs once and model-loading runs once per
model.

The preprocessing phase adds inter-server network overhead to generate
ciphertexts for $\myvec{a}$ (step~\ref{e:converttoct} in
Figure~\ref{f:protocol-after-outsourcing}); this overhead is a small multiple of
$\myvec{a}$'s size due to the packing technique. The preprocessing phase 
adds \cpu cost, mainly due to the calls to LPR encryption function and the HSS 
Mult instruction.
(The online phase does not change relative to 
Figure~\ref{f:beaver-and-bgi}, so its costs do not get affected.)

\emparagraph{Security analysis.}
The protocol's security follows from the security of \hssbkslprname and Yao's
garbled circuits. In particular, the protocol satisfies the single-TEE
definition in \S\ref{s:security-definition}. Appendix~\ref{a:single-tee-proof} contains the
proof. 
\section{Details of distributing-trust step}
\label{s:fsskeygeneration}

The protocol has so far assumed a single TEE per server. 
    In particular, 
\begin{myenumerate4}[(i)]
\item step~\ref{e:diffiehellman} 
    in Figure~\ref{f:beaver-and-bgi} 
    uses the TEE $T_b$ at server $S_b$ to 
    set up a common seed for a PRF so that the TEEs
    at the two servers
    generate the same sequence of random values,

\item step~\ref{e:fsskeygeneration} in Figure~\ref{f:beaver-and-bgi} 
    uses the TEE machine $T_b$ at server $S_b$ to sample randomness $\myvec{r}$
    and output $sh_b^{\myvec{r}}$ to $M_b$, and

\item the same step 
    uses the TEE machine $T_b$ to 
    run $(k_0, k_1) \leftarrow 
        \fssgen(1^{\lambda}, \widehat{f}_{\myvec{r}}())$
    and output FSS key $k_b$ to $M_b$.
\end{myenumerate4}
    In this section, we
remove the single TEE limitation, by distributing
the computation in these steps over multiple, heterogeneous TEE machines. 

First off, in the multiple-TEE setting, 
both servers $S_0, S_1$ consist of a group of three TEEs denoted by 
$T_0^{(0)},
T_0^{(1)}, T_0^{(2)}$ 
and $T_1^{(0)}, T_1^{(1)}, T_1^{(2)}$ respectively. 

Then, to distribute the first part above (under bullet (i)), 
each pair of TEEs ($T_{0}^{(i)}, T_{1}^{(i)}$) for $i \in \{0, 1, 2\}$ 
establishes a common PRF seed, say $seed_i$, using the Diffie-Hellman key exchange
protocol~\cite{diffie1976new, chevalier2009optimal}. 
A common PRF seed ensures that both TEE machines in a pair (where one comes from
either server) generate the same sequence of random values. 

Next, to distribute the second part above (generation of $\myvec{r}$ under
bullet (ii) above), 
each TEE samples randomness locally and considers it to be its share of $\myvec{r}$. In more detail, 
let $r$ be a component of the randomness
vector $\myvec{r}$, and $r_0, r_1, r_2$ be uniformly
random elements in $\mathbb{Z}_p$ such that $r_0 + r_1 +
r_2 = r \pmod{p}$. 
Then, each TEE $T_b^{(i)}$ for $i \in \{0, 1, 2\}$ executes 
        the procedure GenRand in Figure~\ref{fig:beaver-and-bgi} to sample $r_i$. 
    Further, TEE $T_b^{(i)}$ 
        sends a share of $r_i$, that is, $sh_b^{r_i}$, to $M_b$.
\begin{figure}[t]
{\begin{algorithmic}
    \Function{\textbf{GenRand}}~{($b, seed_i$):}
        \State $r_i \gets PRF_{seed_i}(counter) \pmod{p}$
        \State $sh_0^{r_i} \gets PRF_{seed_i}(counter + 1) \pmod{p}$
        \State $sh_1^{r_i} \gets r_i - sh_0^{{r}_i} \pmod{p}$
        \State \Return $sh_{b}^{r_i}$
    \EndFunction
\end{algorithmic}
}
\caption{Procedure that TEE $T_b^{(i)}$ runs to distributively generate randomness
needed for the BGI protocol ($\myvec{r}$ in step~\ref{e:fsskeygeneration} in
Figure~\ref{fig:beaver-and-bgi}).}
\label{f:genrand}
\label{fig:genrand}
\end{figure}
Machine $M_b$ receives shares 
$sh_b^{r_0},
sh_b^{r_1}$, and $sh_b^{r_2}$ 
from $T_b^{(0)}$, $T_b^{(1)}$, and
$T_b^{(2)}$ respectively, and computes 
$sh_b^{r} = 
sh_b^{r_0} + 
sh_b^{r_1} + sh_b^{r_2}$.

Finally, to distribute the third part (FSS key generation under bullet (iii)
above), 
a natural starting point is to use   
a general-purpose three-party MPC
    protocol~\cite{yao1982protocols, goldreich1987play}. 
However, general-purpose protocols are expensive. 
Instead, \sys observes that the task at hand is to compute FSS keys for only the
    point and interval functions as they
can express all common non-linear functions (\S\ref{s:beaver-and-bgi}). 
Thus, \sys uses a customized 
    protocol for the point
        and interval functions, 
    thereby reducing overhead in comparison to a general solution.
In the rest of this section, we focus on this protocol; we first give a brief
overview of $\fssgen$ and then describe the protocol.

\emparagraph{Overview of \fssgen.}
\label{s:fssbackground}
\label{s:fss}
The FSS scheme of Boyle et al. is sophisticated~\cite{boyle2015function,
boyle2016function}.
        Moreover, 
            one does not need to understand its low-level details 
            to understand \sys's protocol.
For these reasons, 
    we describe only the notable
    aspects of Boyle et al.'s scheme. 

\begin{figure}[t!]
\centerline{\includegraphics[width=3.35in]{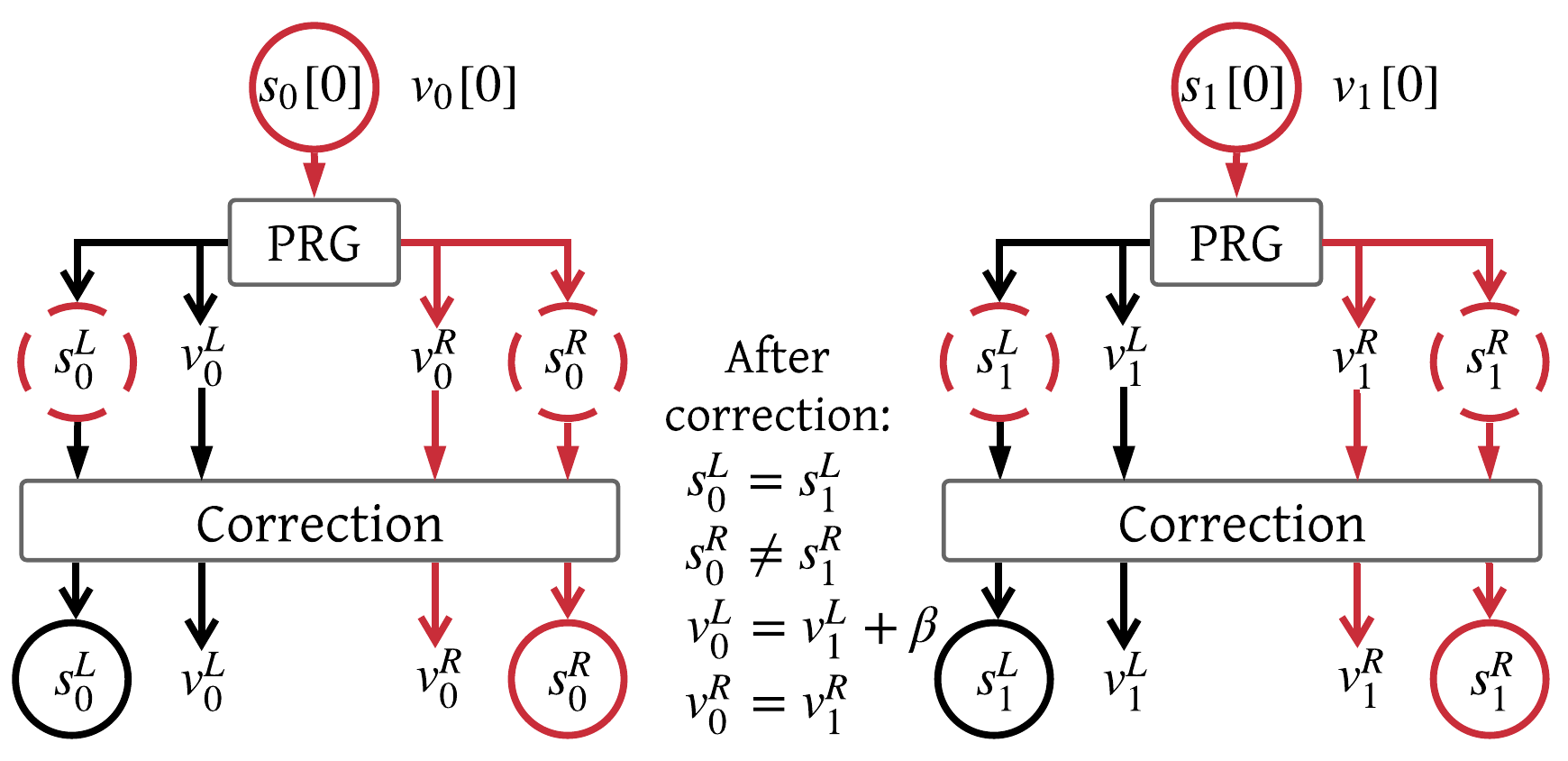}}
\caption{%
Pictorial depiction of the underlying computation in \fssgen~\cite{boyle2015function,
boyle2016function}. The procedure expands two paths in a binary tree; for each
node in the path, it invokes a PRG.}
\label{f:fssgenpic}
\end{figure}

Figure~\ref{f:fssgenpic} shows the key idea 
    behind $\fssgen$. 
Essentially, the keys $k_0, k_1$ output by $\fssgen$
            are paths from
        root to leaf nodes in
            two correlated trees.
At each step of path traversal,
        $\fssgen$
    takes a random string and performs two computations: (i)~expands
        the random string to two random strings for the two
        children, and (ii)~corrects the value
            of the children's strings so that they
        satisfy a certain constraint.
These two computations are depicted in Figure~\ref{f:fssgenpic}
    as the ``PRG'' (pseudorandom number generator) and 
        ``correction'' blocks.
The former is usually instantiated using AES and incurs significant expense when 
    performed inside MPC (we use ABY\textsuperscript{3}~\cite{mohassel2018aby}
in our implementation; \S\ref{s:impl}). Meanwhile, the
correction step is cheap as it mainly consists of XORs, which are
typically efficient in MPC.
Therefore, \sys's goal 
    is to reduce the cost of invoking the PRG.

\emparagraph{\Sys's protocol for \fssgen.} 
\Sys's idea is to bring the invocation of the PRG ``outside'' of 
        the general-purpose MPC protocol. 
Suppose that when the three TEE machines at server $S_b$, that is,
$T_b^{(0)},
T_b^{(1)}, T_b^{(2)}$,
    reach the PRG step, they hold
    XOR-shares of a string $x \in \{0,1\}^{\lambda}$ that they have to expand
    by applying a PRG, $G$. 
    That is, $T_b^{(j)}$
        holds
        $x_j$ such that $x_0 \oplus x_1 \oplus x_2 = x$. Then, to obtain strings that are computationally indistinguishable from the
shares of $G(x)$, $T_b^{(j)}$ does the following:
\begin{myitemize2}
    
    \item Splits $x_j$ into three blocks $x_j = x_j[0] \| x_j[1] \| x_j[2]$, where
    $\|$ denotes string concatenation. 

    \item Sends $x_j[i]$ and $x_j[k]$ (for $i, k \in \{0, 1, 2\}$ such that $i
\neq j, k \neq j$) to $T_b^{(i)}, T_b^{(k)}$, respectively. 
        After this step, 
        $T_b^{(0)}$, 
$T_b^{(1)}$, and
$T_b^{(2)}$, 
                obtain $x[0]$, $x[1]$, and $x[2]$,
            respectively. That is, the TEE machines obtain blocks of $x$
            from their shares of $x$.

    \item Invokes a PRG locally, say $g$, over $x[j]$ 
        to expand it to the same length as the output of $G(x)$. 

    \item Treats the output of $g$ as its XOR-share of
            $G(x)$, and continues onto
                the next step in \fssgen.
\end{myitemize2}

\emparagraph{Security and cost analysis.}
    Appendix~\ref{a:prf-proof} 
    proves that the output of $g(x[j])$ is indistinguishable 
        from a share of $G(x)$ even if 
        two blocks of the seed $x$ of $G$
    are revealed to a distinguisher.
    Further,
        Appendix~\ref{a:multiple-tee-proof} 
        proves that 
        \sys's protocol with multiple TEEs
    meets the multiple-TEE security definition in
\S\ref{s:security-definitions}. Meanwhile, the benefit of the PRG optimization
    is a reduction in both \cpu and network
    overhead relative to a general MPC solution,
        as each TEE invokes a PRG natively on its \cpu, rather than inside
            the MPC framework.
For instance, the network transfers between the TEE machines reduce from 1.6~MB in
ABY\textsuperscript{3}~\cite{mohassel2018aby} to 60~KB 
with the optimization (\S\ref{s:eval:microbenchmarks}).
\section{Implementation}
\label{s:impl}

We have implemented a prototype of \sys
(\S\ref{s:arch}, \S\ref{s:beaver-and-bgi}--\S\ref{s:fsskeygeneration}).
Our prototype builds on existing libraries. 
It implements the FSS primitives and the BGI protocol 
(\S\ref{s:beaver-and-bgi}) using the 
    libFSS library~\cite{libfss}.  
It implements the \hssbkslprname HSS scheme and our extensions 
        to the scheme
        (\S\ref{s:triplegeneration})
    on top of
    Microsoft's SEAL library~\cite{sealcrypto}. We borrow small pieces of code from ABY\textsuperscript{3}~\cite{mohassel2018aby} and 
    OpenSSL to implement the 
        secure computation protocol 
    for \fssgen (\S\ref{s:fsskeygeneration})
        atop the Asylo framework~\cite{asylo} for Intel SGX. Finally, \sys's various components (\S\ref{s:arch}) communicate
    over the gRPC RPC framework~\cite{gRPC}. 
In total, \sys's prototype adds 17,000 lines of C++ on top of 
    existing libraries; this number is measured using the sloccount Linux
utility~\cite{sloccount}.
\section{Evaluation}
\label{s:eval}
\label{s:evaluation}

Our evaluation answers the following questions:

\begin{myenumerate4}
    \item What are \sys's overheads for computing vector-matrix products, ML non-linear
    functions, and performing inference over popular ML models? 

    \item How do \sys's overheads compare
    to those of the state-of-the-art cryptography-based works?

    \item How accurately can \sys perform inference?

    \item How big is \sys's software TCB and how does its size compare to the TCB of systems that run ML inference completely inside TEEs?
\end{myenumerate4}

\begin{figure}[t]
\footnotesize
\centering

\begin{tabular}{
@{}
*{1}{>{\raggedright\arraybackslash}b{.027\textwidth}}  @{ }
*{1}{>{\raggedleft\arraybackslash}b{.1\textwidth}}  @{ }
*{1}{>{\raggedleft\arraybackslash}b{.055\textwidth}}  @{ }
*{1}{>{\raggedleft\arraybackslash}b{.055\textwidth}} @{ }
*{1}{>{\raggedleft\arraybackslash}b{.06\textwidth}}  @{ }
*{1}{>{\raggedleft\arraybackslash}b{.09\textwidth}}  @{ }
*{1}{>{\raggedleft\arraybackslash}b{.045\textwidth}} @{ }
@{}
}

&  & & RAM & network & & \\
vendor  &   type & vCPUs  &  (GB) & (Gbps) &  processor & loc. \\

\midrule
 AWS    & m5.4xlarge      &   16   & 64       & 10        & Xeon & CA \\
 Azure  & D16s-v3          &   16   & 64       & 8         & Xeon & CA \\
 Azure  & L8s-v2        &   8   & 64       & 3.2        & AMD EPYC  & WA \\
 Azure  & DC1s-v2       &   1   & 4       & 2         & Xeon-SGX   & VA \\
\bottomrule
\end{tabular}
\caption{Machines used in our experiments.}
\label{fig:testbed}
\label{f:testbed}
\end{figure}
 
\begin{figure}[t]
    \footnotesize
    \centering
    
    \begin{tabular}{
        @{}
        *{1}{>{\raggedright\arraybackslash}b{.1\textwidth}}  @{ }
        *{1}{>{\raggedleft\arraybackslash}b{.04\textwidth}}
        *{1}{>{\raggedleft\arraybackslash}b{.04\textwidth}}
        *{1}{>{\raggedleft\arraybackslash}b{.04\textwidth}}
        *{1}{>{\raggedleft\arraybackslash}b{.04\textwidth}}
        *{1}{>{\raggedleft\arraybackslash}b{.04\textwidth}}
        *{1}{>{\raggedleft\arraybackslash}b{.04\textwidth}}
        *{1}{>{\raggedleft\arraybackslash}b{.04\textwidth}}
        *{1}{>{\raggedleft\arraybackslash}b{.04\textwidth}}
        @{}
        }
        & \multicolumn{8}{c}{\cpu time} \\
        \midrule

        & 
        \multicolumn{2}{r}{VecToPoly} & 
        \multicolumn{2}{r}{PolyToVec} & 
        \multicolumn{2}{r}{LPR.Enc} & 
        \multicolumn{2}{r@{}}{HSS.Mult} \\

        \multicolumn{1}{@{}l}{\textbf{for HSS}} &
        \multicolumn{2}{r}{ 185.0 $\mu$s} &
        \multicolumn{2}{r}{ 168.4 $\mu$s} & 
        \multicolumn{2}{r}{ 4.9 ms} &
        \multicolumn{2}{r@{}}{ 3.6 ms} \\

        \midrule
        & \multicolumn{5}{c}{\cpu time} & \multicolumn{3}{r@{}}{network transfers}\\
        \hline

        \multicolumn{1}{@{}l}{\textbf{for FSS}} & 
        \multicolumn{2}{r}{\textit{Single TEE}} & 
        \multicolumn{3}{r}{\textit{Multiple TEEs}} & 
        \multicolumn{3}{r@{}}{\textit{Multiple TEEs}} \\

        \multicolumn{1}{@{}l@{}}{Gen (pt. fn.)} &
        \multicolumn{2}{r}{ 47.2 $\mu$s} & 
        \multicolumn{3}{r}{0.33 ms} & 
        \multicolumn{3}{r@{}}{ 43.8 KB} \\

        \multicolumn{1}{@{}l@{}}{Eval (pt. fn.)} &
        \multicolumn{2}{r}{ 22.1 $\mu$s} & 
        \multicolumn{3}{r}{ 147.5 $\mu$s} & 
        \multicolumn{3}{r@{}}{N/A} \\

        \multicolumn{1}{@{}l@{}}{Gen (int. fn.)} &
        \multicolumn{2}{r}{ 63.7 $\mu$s} & 
        \multicolumn{3}{r}{0.44 ms} & 
        \multicolumn{3}{r@{}}{ 55.7 KB} \\

        \multicolumn{1}{@{}l@{}}{Eval (int. fn.)} &
        \multicolumn{2}{r}{ 28.8 $\mu$s} & 
        \multicolumn{3}{r}{ 169.3 $\mu$s} & 
        \multicolumn{3}{r@{}}{N/A} \\

        \bottomrule
    \end{tabular}
    \normalfont\selectfont
    \caption{\cpu times and network transfers for HSS, FSS procedures, averaged over 1000 runs. Standard deviations (not shown) are
        within $1\%$ percent of the means. Network transfers are intra-domain.}
    \label{fig:microbenchmarks}
    \label{f:microbenchmarks}
\end{figure}

\noindent A summary of our evaluation results is as follows:
\begin{myitemize3}
    \item \Sys's \cpu and network overhead for computing a vector-matrix product is
    at least 1.5--2.2$\times$ and 1--60$\times$ lower depending on vector-matrix
    dimensions in comparison to state-of-the-art cryptography-based
    works (\S\ref{s:eval:overheadlinear}).

    \item \Sys's \cpu overhead is 11.6--45.1$\times$ higher for computing a non-linear function depending on the function
    in comparison to the popular Yao method in prior work.
        However, \sys's network overhead is 
    121.9--2819$\times$ lower
    (\S\ref{s:eval:overheadnonlinear}). 

    \item \Sys's \cpu overhead for private inference is higher than prior state-of-the work
        cryptography-based work by at most 14.2$\times$, while its inter-server network overhead is
        46.4--1448$\times$ lower depending on the ML model (\S\ref{s:eval:overheadinference}).

    \item Given that \cpu is cheaper than network resource, \sys's dollar cost for ML inference is 
        5.4--385$\times$ lower than prior work depending on the ML model
            (\S\ref{s:eval:overheadinference}).

    \item \Sys's inference accuracy is 1--2\% lower than TensorFlow's as it represents
            floating-point as fixed-point numbers (\S\ref{s:eval:accuracy}).

    \item \Sys runs 1,300 lines of code inside its TEEs, which is
        14.6--29.2$\times$ lower than the amount of code run inside TEEs by prior
        TEE-based works
    (\S\ref{s:eval:tcb}).
\end{myitemize3}

\paragraph{Method and setup.} 
We compare \sys's 
    two variants with single and multiple TEEs per server, which we call
    \sysstee and \sysmtee, to 
    several state-of-the-art baseline
systems.
For the performance-related questions, we compare \sys's variants to the
following cryptography-based systems.

\begin{figure*}[t]
\centering
\includegraphics[width=3.47in]{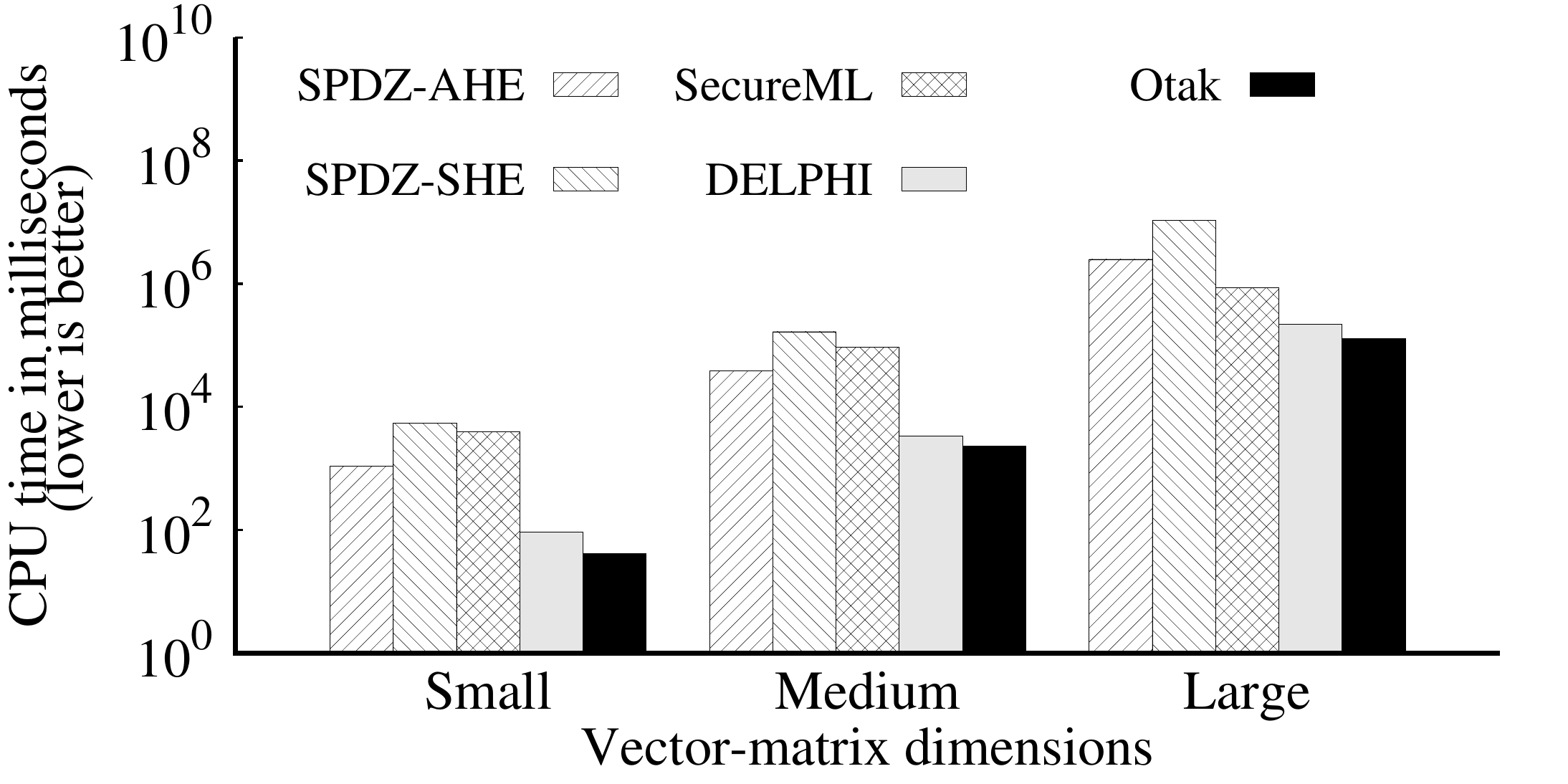}
\includegraphics[width=3.47in]{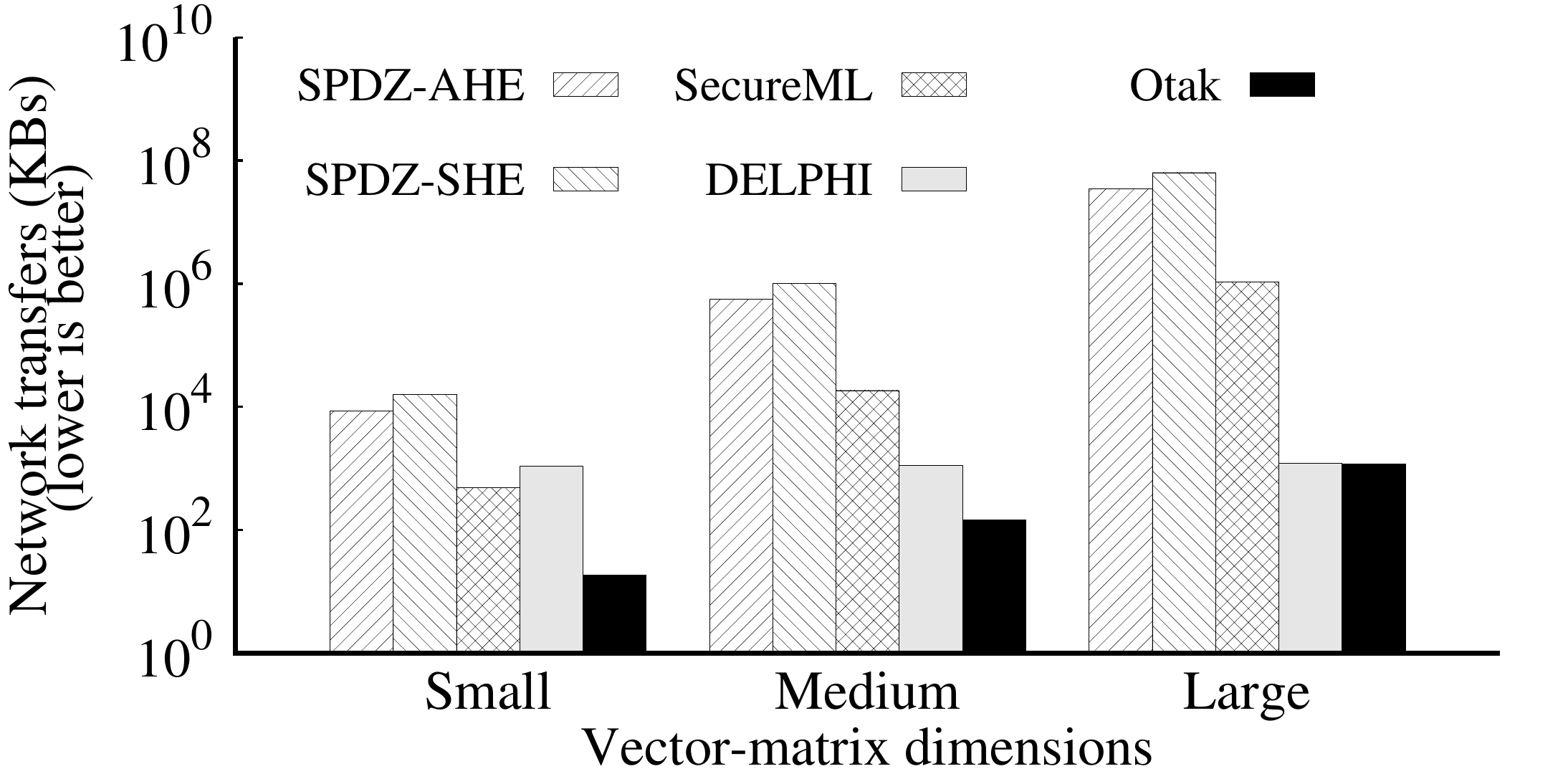}
\caption{\cpu times and network transfers for \sys and the baseline systems for privately computing vector-matrix
products.}
\label{f:overhead-vec-matrix}
\end{figure*}

\begin{myitemize}
    \item SecureML~\cite{mohassel2017secureml} and \spdz~\cite{keller2018overdrive,chen2019secure, damgaard2012multiparty} are 
    the state-of-the-art systems for the \sys-like setting where inference runs
    over two non-colluding servers that hold secret shares of model parameters
        and data points. We run the code
    of these systems while configuring them to provide
        honest-but-curious security.

    \item \delphi~\cite{mishra2020delphi} (which
        optimizes Gazelle~\cite{juvekar2018gazelle}) is a recent optimized
        2PC system that runs inference between a client (that owns data points) and a
            server (that owns model parameters). We run \delphi as a two-server
            system where one server owns model parameters and the other owns
            data points. Note that \sys's comparison to \delphi is not
            apples-to-apples as \sys further secret shares model parameters
            and data points between the two servers so that neither learns them. However, we include
            \delphi as it is a recent optimized 2PC system.
\end{myitemize}

\noindent For the TCB related questions, we compare \sys to two 
SGX-based systems: TF Trusted and \privado. TF Trusted is 
    optimized to run TensorFlow Lite inside an Intel SGX TEE~\cite{tftrusted}, 
    while \privado runs a reduced Torch ML framework inside
SGX~\cite{tople2018privado}. 
We pick these two systems from the set of SGX-based systems as their TCBs
are either reported or
    can be measured from their publicly available code.

We run a series of experiments to answer the evaluation questions above. Our
experiments deploy 
a system and vary parameters such as vector-matrix dimensions, the non-linear function
(ReLU, Maxpool, Sigmoid, Tanh, Argmax), and the ML model. For the latter,
we use four datasets (MNIST~\cite{lecun2010mnist}, CIFAR-10~\cite{kriz2014cifar},
CIFAR-100~\cite{kriz2014cifar}, and ISOLET~\cite{dua2017isolet}) and five ML model architectures
(two FNNs and three CNNs) including the 32-layer ResNet-32 CNN~\cite{he2016deep} (Appendix~\ref{a:models} gives more details). 
    These datasets and models perform a variety of inference tasks including
image classification and speech recognition. 
Our experiments measure inference accuracy for these
models as well as resource consumption: \cpu time using \texttt{std::clock()}, real time (latency) using
\texttt{std::chrono::high\_resolution\_clock}, and network transfers using Linux
kernel's \texttt{/proc/net/dev}. 

Our testbed (Figure~\ref{f:testbed}) is a set of machines on Amazon EC2 and
Microsoft Azure. Within these cloud providers, we use both general-purpose and TEE machines. Cloud providers currently
offer only Intel SGX-based TEEs~\cite{sgxavailability}; we use two such machines
and a regular AMD machine as \sysmtee's three TEEs. 
In a real deployment, \sys-MTEE would use three different TEEs.

\subsection{Microbenchmarks}
\label{s:eval:microbenchmarks}

We begin 
by
presenting \cpu and network transfers for primitive operations in \sys's 
cryptographic protocols (\S\ref{s:beaver-and-bgi}, \S\ref{s:triplegeneration},
\S\ref{s:fsskeygeneration}). Figure~\ref{f:microbenchmarks} shows these
microbenchmarks.

For computing vector-matrix products, \sys generates
    Beaver triple shares using the \hssbkslprname scheme
(\S\ref{s:hssfixes}). 
The first part of the microbenchmarks figure
        shows
the \cpu times for \hssbkslprname operations: converting a vector to its polynomial encoding
and back (used in step~\ref{e:converttoshares} and step~\ref{e:hssmult} in
Figure~\ref{f:protocol-after-outsourcing}), generating LPR ciphertexts
(step~\ref{e:converttoct} in Figure~\ref{f:protocol-after-outsourcing}), and
performing \hssbkslprname multiplication (step~\ref{e:hssmult} in
Figure~\ref{f:protocol-after-outsourcing}).
These microbenchmarks are 
    over an \texttt{m5.4xlarge} EC2
instance (Figure~\ref{f:testbed}). Note that we configure 
    \hssbkslprname for polynomials
with $N = 8{,}192$ coefficients, where each coefficient is up to $52$-bits.
These
parameters are chosen according to the homomorphic
encryption standard for a 128-bit security~\cite{HomomorphicEncryptionSecurityStandard}.

For computing non-linear functions such as ReLU, \sys relies on the FSS scheme
that supports point and interval functions (\S\ref{s:beaver-and-bgi},
\S\ref{s:fsskeygeneration}). The second part of the 
    microbenchmarks figure shows \cpu times and
network transfers (between TEE machines) for FSS procedures ($\fssgen, \fsseval$) 
    for the two functions. 

\begin{figure}[t]
\footnotesize
\centering

\begin{tabular}{
@{}
*{1}{>{\raggedright\arraybackslash}b{.12\textwidth}}  @{ }
*{1}{>{\raggedleft\arraybackslash}b{.07\textwidth}}
*{1}{>{\raggedleft\arraybackslash}b{.07\textwidth}}
*{1}{>{\raggedleft\arraybackslash}b{.075\textwidth}}
@{}
}
        & \cpu &  network (local) & network (wide-area) \\
        
    \midrule
    \textbf{ReLU} \\
    Yao &
                    0.45 ms &
                    0 &
                     8.3 KB
                    \\
    \sysstee &
                     1.3 ms &
                     13.9 KB &
                     18.0 B
                \\
    \sysmtee &
                     5.8 ms &
                     336.8 KB &
                    18.0 B 
                \\
    \midrule
    \textbf{Sigmoid} \\
    Yao &
                    1.3 ms &
                    0 &
                     46.8 KB
                    \\
    \sysstee &
                     3.4 ms &
                     37.7 KB &
                     17.0 B
                \\
    \sysmtee &
                     18.5 ms &
                     989.1 KB &
                     17.0 B 
                \\
    \midrule
    \textbf{Tanh} \\
    Yao &
                    1.44 ms &
                    0 &
                     48.6 KB
                    \\
    \sysstee &
                     3.3 ms &
                     37.8 KB &
                     18.0 B
                \\
    \sysmtee &
                     16.7 ms &
                     989.9 KB &
                     18.0 B 
                \\
    \midrule
    \textbf{MaxPool} \\
    Yao &
                    0.45 ms &
                    0 &
                     12.2 KB
                    \\
    \sysstee &
                     2.9 ms &
                     23.5 KB &
                     89.0 B
                \\
    \sysmtee &
                     20.3 ms &
                     1.2 MB &
                     89.0 B 
                \\
    \midrule
    \textbf{Argmax} \\
    Yao &
                    0.48 ms &
                    0 &
                     12.5 KB
                    \\
    \sysstee &
                     3.2 ms &
                     24.8 KB &
                     105.0 B
                \\
    \sysmtee &
                     21.6 ms &
                     1.3 MB &
                     105.0 B 
                \\
    \bottomrule
    
\end{tabular}
\normalfont\selectfont
\caption{\cpu times and network transfers for computing non-linear
functions using Yao and \sys. The overheads of MaxPool and Argmax
depend linearly on the number of input entries; 
here, we show overhead per entry.}
\label{fig:overheadnonlinearfunctions}
\label{f:overheadnonlinearfunctions}
\end{figure}
\begin{figure*}[t]
\centering
\includegraphics[width=3.47in]{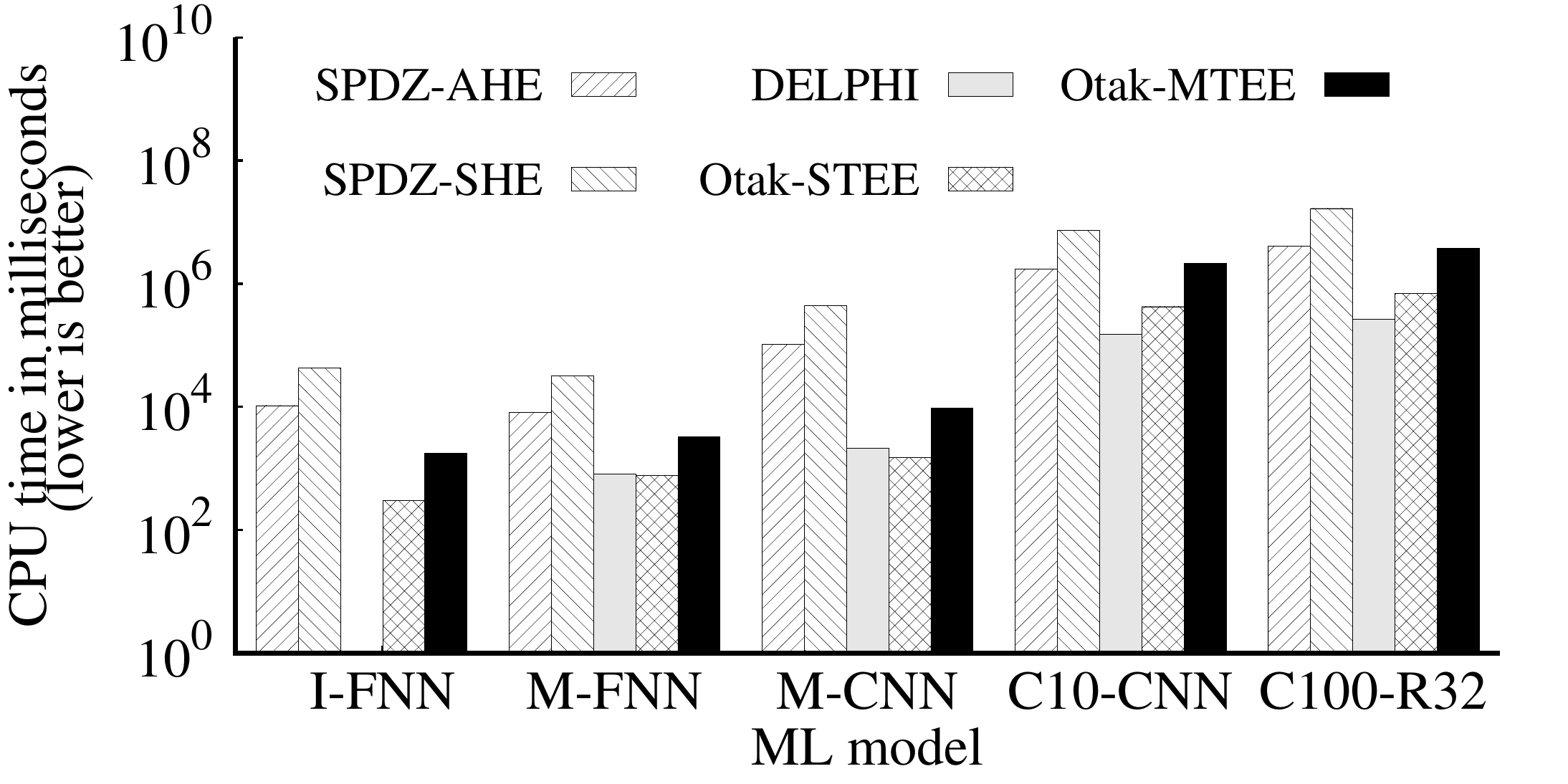}
\includegraphics[width=3.47in]{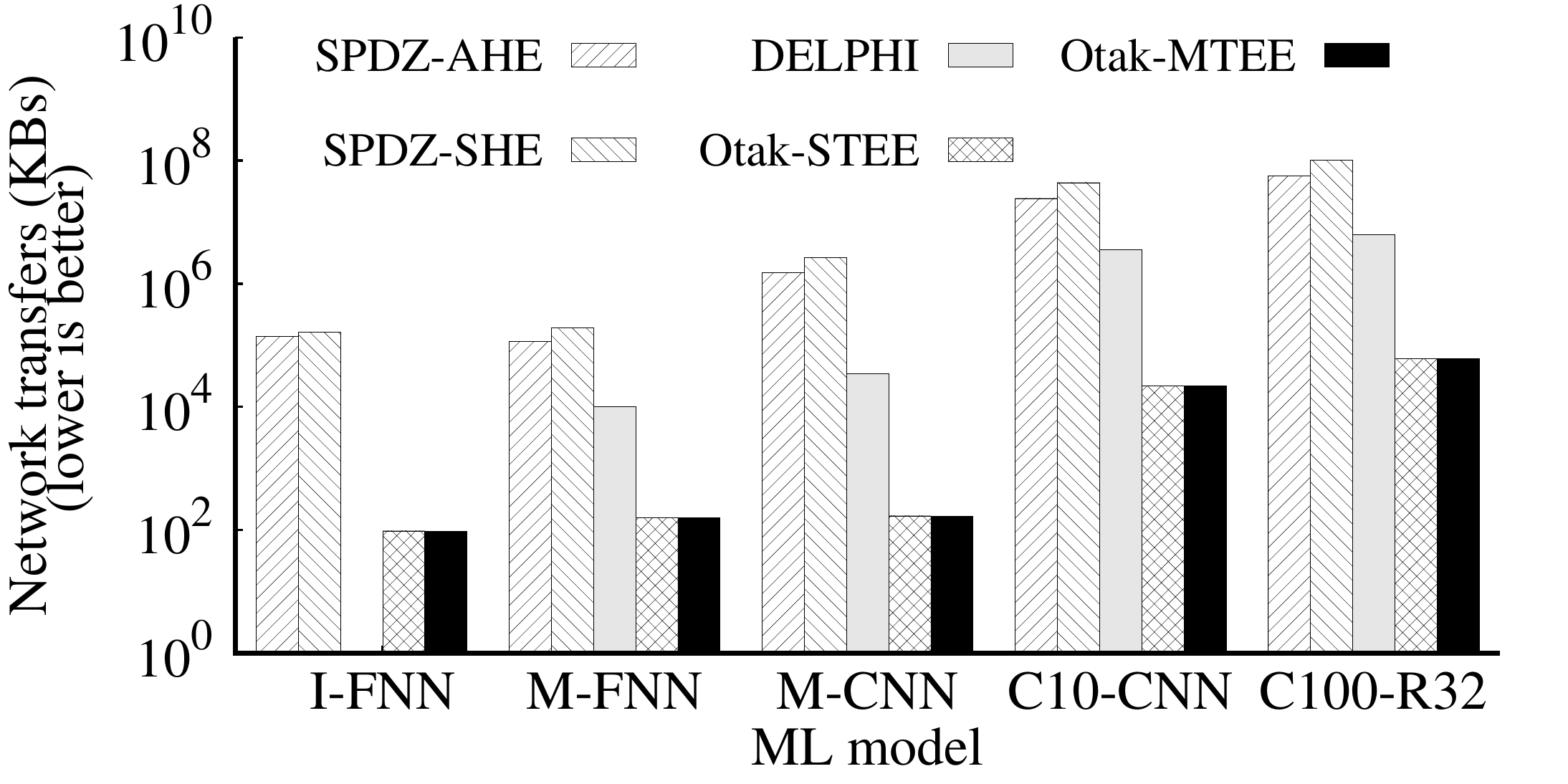}
\caption{\cpu time and (wide-area) network transfers for \sys and baseline systems for various ML models. \delphi does not support Tanh for
ISOLET-FNN (I-FNN). \sys's \cpu times are higher than prior work's
\cpu time. However, \sys reduces the expensive network transfers.
        SecureML (not shown in the figure) 
            currently only implements the ReLU function, and can, therefore, 
            encode just the MNIST-FNN (M-FNN) model. For this model, 
        its \cpu and network overhead is 9.8$\times$ and 46.4$\times$,
respectively, higher than \sys-MTEE's.}
\label{f:overhead-inference}
\end{figure*}

\subsection{Overheads of computing vector-matrix products}
\label{s:eval:overheadlinear}

Figure~\ref{f:overhead-vec-matrix} shows the \cpu times and (wide-area) network 
    transfers of various systems 
        for computing vector-matrix products
   while varying the dimensions of the vector and the matrix.
The ``small'', ``medium'', and ``large'' dimensions
    correspond to matrices with $128 \times 128$, $1024 \times 1024$,
$8192 \times 8192$ entries.

\emparagraph{\cpu overhead.}
For a particular matrix dimension, 
    \sys 
     (whose two variants do not differ in how they
                compute vector-matrix products),
        consumes a lower amount of \cpu than SecureML, \spdz (its both variants
            based on additive and somewhat homomorphic encryption), and
\delphi. For instance, \sys's \cpu consumption is 6.7--98.2$\times$
    lower than SecureML's, and  
            1.5--2.2$\times$ lower than \delphi's.
    SecureML consumes a high amount of \cpu because
        it uses the expensive number-theoretic Paillier
        additive homomorphic encryption scheme~\cite{paillier99public}.
    \spdz and \delphi 
        use modern additively homomorphic encryption schemes along-with
            packing techniques~\cite{juvekar2018gazelle}.
    Meanwhile, \sys improves over these works 
            by using HSS which enables even more efficient packing
                for Beaver triple generation (\S\ref{s:hssfixes}).

\emparagraph{Network overhead.}
Like for \cpu, \sys's network transfers
    are lower than those in prior works.
For instance, \sys's overhead is 
    1--60$\times$ lower than \delphi's.
    This is because \delphi does not optimally use
            the domain of its underlying encryption scheme; \sys,
        instead, optimizes the input domain of HSS instructions
(\S\ref{s:hssfixes}).
    Note than for ``large'' matrix dimensions when the size of the vectors
    equals the dimension of the underlying polynomial ring ($8{,}192$),
        \sys's overhead is same as \delphi's as packing
        does not take effect.
    However, for other dimensions, \sys's packing helps reduce
        overhead.

\subsection{Overheads of computing non-linear functions}
\label{s:eval:overheadnonlinear}

Figure~\ref{f:overheadnonlinearfunctions} shows the overheads of privately computing ReLU, Maxpool,
    Sigmoid, Tanh, and Argmax using \sys's FSS-based protocol
(\S\ref{s:beaver-and-bgi},
    \S\ref{s:fsskeygeneration}) and Yao's garbled circuits---the protocol
commonly used in prior two-party
    systems.

At a high level, Yao's protocol incurs lower \cpu consumption than \sys,
    especially when \sys uses multiple TEEs. The reason is that the \fsseval
    procedure is more expensive than the evaluation procedure of a Yao's garbled
circuit. Besides, the use of MPC to generate FSS keys adds \cpu expense for the
multiple TEE case. On the other hand, 
    Yao
    incurs high wide-area network transfers as it 
    exchanges a large Boolean circuit representing of the non-linear function
    between the two servers. 
In contrast, 
    \sys's wide-area network transfers are small (for example, by a factor of
461$\times$ for ReLU) due to the network-efficient online phase 
        of \sys's FSS-based BGI protocol 
(step~\ref{e:fsseval} in Figure~\ref{f:beaver-and-bgi}).
    \sys does incur intra-domain (local) transfers between TEE machines and
        between TEE machines and general-purpose machines
(step~\ref{e:fsskeygeneration} in Figure~\ref{f:beaver-and-bgi}). 
    However, local transfers are cheap. 
    Indeed, popular cloud providers do not charge for intra-domain network
        transfers within a geographical zone~\cite{googlenetworkpricing, azurenetworkpricing}.
    Overall, the reduction in expensive wide-area network overhead outweighs the
        increase in cheaper \cpu and local network transfers.

The \cpu and network costs for \sys follow from microbenchmarks (Figure~\ref{f:microbenchmarks}). For instance, ReLU
calls the interval function six times. According to
Figure~\ref{f:microbenchmarks}, six calls
to the interval function with multiple TEEs incurs 334.2~KB in local 
transfers,
which is roughly what Figure~\ref{f:overheadnonlinearfunctions} reports.

\subsection{Overheads of private inference}
\label{s:eval:overheadinference}

\emparagraph{\cpu and network overhead.}
    Figure~\ref{f:overhead-inference} shows \cpu and (wide-area) network use for the various systems
        and ML models.

    \Sys's \cpu time is higher than the \cpu time in prior work. The reason is
    that \sys requires more 
        \cpu for non-linear
    functions (\S\ref{s:eval:overheadnonlinear}),  
        especially when it uses multiple TEEs.
    However, \sys reduces network transfers (\S\ref{s:eval:overheadlinear},
    \S\ref{s:eval:overheadnonlinear}) significantly relative to prior work. 
For instance, for the ResNet-32 
    CNN model over the CIFAR-100 dataset (cluster
        labeled C100-R32
        in Figure~\ref{f:overhead-inference}),
        \sys's both variants incur 60~MB of wide-area network transfers whereas
            \delphi consumes 6~GB, \spdz with additive homomorphic encryption
(\spdz-AHE) consumes 53.3~GB, and \spdz with somewhat homomorphic encryption
                consumes 96.4~GB. 
        Note that SecureML (which we do not show in the figure) 
            currently only implements the ReLU function, so 
            it can encode just the MNIST-FNN (M-FNN) model. For this model, 
        its \cpu and network overhead is 9.8$\times$ and 46.4$\times$,
respectively, higher than \sys's.

\begin{figure}[t]
    \footnotesize
    \centering
    
    \begin{tabular}{
        @{}
        *{1}{>{\raggedleft\arraybackslash}b{.08\textwidth}}
        *{1}{>{\raggedleft\arraybackslash}b{.043\textwidth}}
        *{1}{>{\raggedleft\arraybackslash}b{.053\textwidth}}
        *{1}{>{\raggedleft\arraybackslash}b{.053\textwidth}}
        *{1}{>{\raggedleft\arraybackslash}b{.066\textwidth}}
        *{1}{>{\raggedleft\arraybackslash}b{.066\textwidth}}
        @{}
        }
        & I-FNN & M-FNN & M-CNN & C10-CNN & C100-R32 \\
        \midrule

        SecureML & - & \$0.48 & - & - & - \\

        \spdz-AHE 
        & \$12.15 & \$10.26 & \$130.82 
        & \$2129.13 & \$4918.67 \\

        \spdz-SHE 
        & \$11.56 & \$17.75 & \$242.38 
        & \$4044.48 & \$9337.42 \\

        \delphi
        & - & \$0.48 & \$1.67 
        & \$169.73 & \$301.71 \\
        
        \sys-STEE
        & \cent0.84 & \$0.02 & \$0.03 
        & \$5.63 & \$10.68 \\

        \sys-MTEE
        & \$0.03 & \$0.05 & \$0.15 
        & \$31.43 & \$55.94 \\
        \bottomrule
    \end{tabular}
    \normalfont\selectfont
    \caption{Dollar costs for 1000 private predictions. 
    \Sys's dollar costs are lower as it reduces
    wide-area network transfers substantially.}
    \label{fig:bills}
    \label{f:bills}
\end{figure}

\emparagraph{Dollar costs.}
\Sys's \cpu use is higher than prior work
    while network use is lower.
To compare the systems using a common metric, we convert their resource
use to a dollar amount.

Figure~\ref{f:bills} shows estimated dollar costs for private inference for the
various systems. 
To do the conversion from resource overhead to dollars, we use a pricing model derived
from the machine and bandwidth prices of Azure and AWS (Appendix~\ref{a:pricingmodel}). This pricing
model charges $\$0.015$--$\$0.079$ for one hour of \cpu time depending on
machine type (SGX versus non-SGX), $\$0.05$ for
one GB of outbound network traffic, and zero for local network transfers. 
The figure shows that \sys's dollar cost, depending on \sys's variant, is 9.6--24$\times$ lower than SecureML's,
87--4360$\times$ lower than \spdz's, and 5.4--55.6$\times$ lower than \delphi's.

\emparagraph{Inference latency.}
Figure~\ref{f:overhead-inference-latency} shows the latency of performing inference
for the various systems (SecureML is not depicted in the figure, and its latency
for M-FNN is 150.2~ms).
Overall, \sys takes less time than SecureML (by 2.45$\times$) and SPDZ (by
2.62--262$\times$), 
but longer than \delphi 
(by up to 10.2$\times$)
to perform inference.
The difference to \delphi is
fundamental---\delphi targets a non-outsourced setting where model parameters are in plaintext while
\sys hides the model parameters from the provider. Owing to the difference in
setting, \delphi's operations are cheaper than \sys's in its online phase.

\begin{figure}[t]
\centering
\includegraphics[width=3.47in]{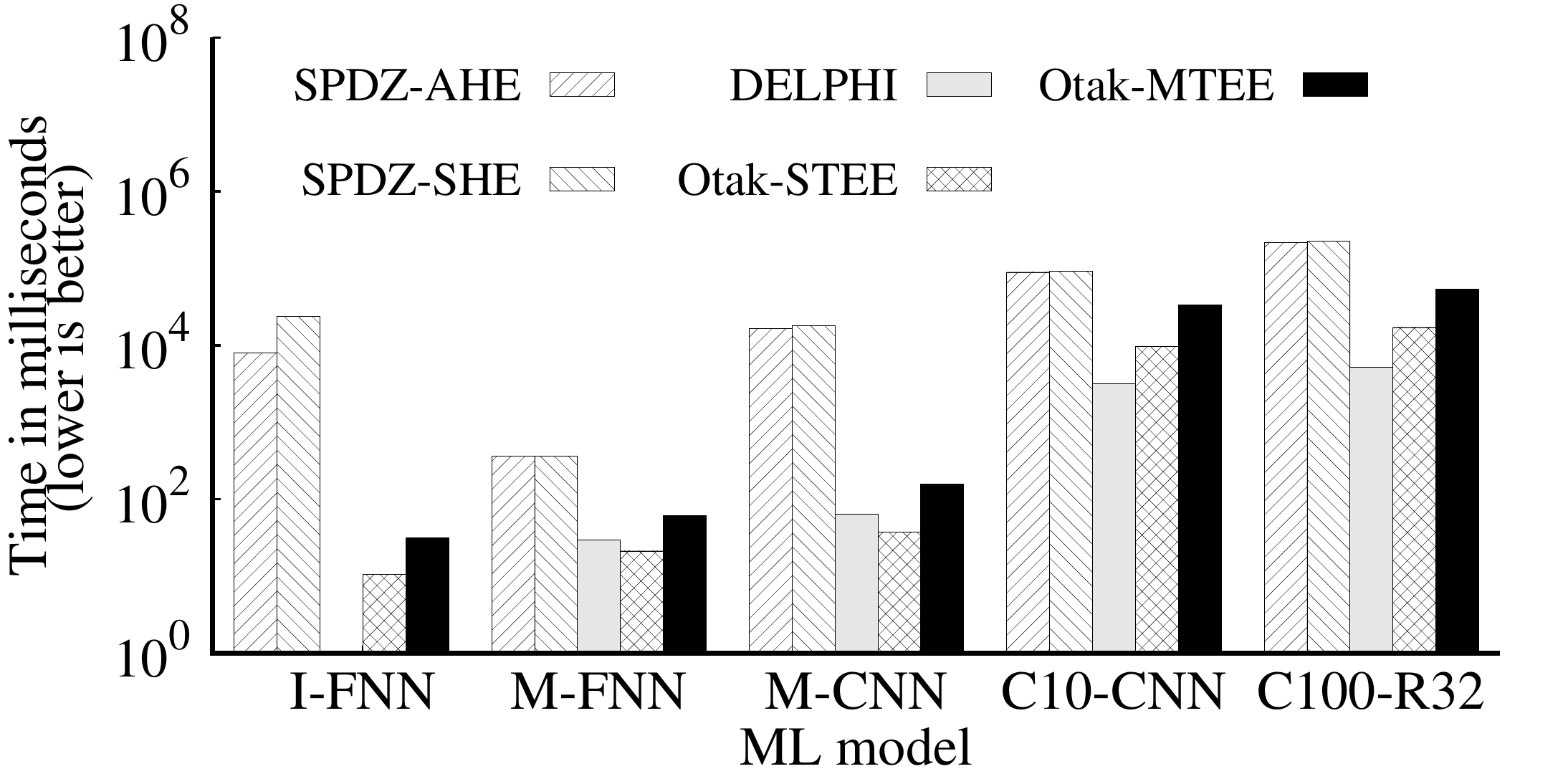}
\caption{Inference latency for \sys and baseline systems. \delphi does not
support Tanh for I-FNN. 
\Sys's difference to \delphi is 
fundamental---\delphi targets a non-outsourced setting where model parameters are in plaintext while
\sys hides the model parameters from the service provider. This difference in
setting enables \delphi to have a very efficient online
phase~\cite{mishra2020delphi}.}
\label{f:overhead-inference-latency}
\end{figure}

\emparagraph{Client-side costs for the data point owner}
\label{s:clientinferenceresults} are small. For
instance, for the largest model (C100-R32), the data point
owner expends 2~ms in \cpu time and 61.2~KB in network transfers to send secret shares of input image and receive the
output label.
            
\label{s:eval:accuracy}
\emparagraph{Accuracy.} We compare \sys's inference accuracy against
TensorFlow's for I-FNN, M-FNN, M-CNN, C10-CNN, and C100-R32. 
\Sys outputs the correct label 
    97.5\%, 
    97.6\%, 
    97.2\%, 
    81.6\%, 
    and
    68.3\%
    of the
time, while TensorFlow outputs correct label 
    99.0\%, 
    97.8\%, 
    98.3\%, 
    82.7\%, 
    and 
    69.8\% 
of times.  \Sys's accuracy is 1-2\% lower than TensorFlow's
because it approximates floating-point as fixed-point numbers while
TensorFlow does not require such approximation (\S\ref{s:beaver-and-bgi}).

\subsection{Trusted computing base (TCB)}
\label{s:eval:tcb}

\Sys improves over cryptography-based solutions by using TEEs. However, a downside is the trust on TEEs, which \sys mitigates by 
        keeping the size of the functionality inside the TEE small and distributing trust over
        heterogeneous TEEs. Here, we report the size of the
        functionality \sys runs inside its TEEs in terms of source lines of
code and compare it to the lines of code inside TEEs for prior TEE-based
systems.

\sys's code inside its TEEs is 1,300 lines. 
This TCB is 14.6$\times$ and 29.2$\times$ lower than the TCB sizes of
\privado and TF trusted, which have 19K and 38K source lines of code inside
the TEE~\cite{tople2018privado, tftrusted}. (For \privado we
report TCB size from their paper; for TF Trusted
\cite{tftrusted}, we count only the \emph{included} header files in the TEE code
as opposed to the entire codebase of library dependencies~\cite{gRPC,
tensorflowcoreapi}.)
\section{Conclusion}
\label{s:discussion}
\label{s:conclusion}

Outsourced private ML inference over two servers is an important problem and has
attracted considerable attention. Prior systems for this problem
    employ two-party secure computation (2PC) protocols that incur high overhead. This paper asked the question, can we accelerate
    2PC for this setting by employing trusted-hardware in a limited
capacity, and found the results to be encouraging. 
In particular, one can accelerate 2PC by one to two orders of
magnitude (\S\ref{s:eval}), by building on recent primitives such as function and homomorphic
secret sharing (\S\ref{s:beaver-and-bgi}--\S\ref{s:fsskeygeneration})---while restricting trusted-hardware 
    to a small computation 
    during preprocessing, and without trusting the hardware of a particular
vendor. By demonstrating these promising results, this paper opens up new avenues---not just for 
        two-server outsourced ML inference
            but also for private ML inference in other settings such as three
        servers.

\appendix
\section{MaxPool and Argmax}

As mentioned earlier (\S\ref{s:beaver-and-bgi}), efficient 
FSS constructions~\cite{boyle2015function, boyle2016function, boyle2019secure}
currently exist only
for two functions: a point function and an interval function.  These
functions can encode a spline function~\cite{boyle2019secure}, which in turn can encode ReLU,
and close approximations of Sigmoid and Tanh~\cite{amin1997piecewise}. However, 
a spline cannot directly encode the MaxPool and Argmax functions. Here,
we describe how \sys encodes MaxPool and Argmax using a composition of the point
and interval functions.

\subsection{Maxpool}
\label{a:max}
\label{a:maxpool}

Note that max over an array of inputs can be expressed
    using multiple instances of max over two inputs using a tournament-like 
    tree structure. So we focus on encoding the latter.

\emparagraph{Problem with a straightforward encoding.}
A straightforward approach to expressing the max of two elements $x, y \in
\mathbb{Z}$ is to use the $\textrm{sign}$ function: 
\[   
\textrm{sign}(x - y) = 
     \begin{cases}
       \text{1 (+ve),} &\quad\text{if $z \geq 0$;}\\
       \text{0 (-ve),} &\quad\text{otherwise.} \\ 
     \end{cases}
\]
If the sign of $x-y$ is 1, then the max equals $x$, else the max equals y. 
However, as noted earlier (\S\ref{s:beaver-and-bgi}), the reasoning above does
not work when operating over elements of $\mathbb{Z}_p$ (instead of elements
of $\mathbb{Z}$)
as $\mathbb{Z}_{p}$ is not an ordered ring (when
encoding both positive and negative integers). 

\emparagraph{\Sys's encoding of max.}
Observe that $\textrm{sign}(x-y)$ gives the right answer when both $x$ and
$y$ have the same sign, that is, when they are either both positive or both negative. 
For this case, we can write $\textrm{max}(x,y)=\textrm{ReLU}(x-y) + y$. 
    For the case when $x$ and $y$ have different signs, we can write
$\textrm{max}(x,y)=\textrm{ReLU}(x)+\textrm{ReLU}(y)$. 
Therefore, \sys expresses
max as
\[   
\textrm{max}(x,y) = 
     \begin{cases}
       \text{ReLU}(x-y) + y,  &\text{if } \textrm{sign}(x) =
\textrm{sign}(y)\text{;} \\ 
       \text{ReLU}(x)+\text{ReLU}(y), &\quad\text{otherwise.}\\
     \end{cases}
\]
To combine the two cases, define a function $b(x,y)$ as
\[   
\textrm{b}(x,y) = 
     \begin{cases}
       \text{1,} &\quad\text{if sign(x) = sign(y);} \\ 
       \text{0,} &\quad\text{otherwise.}\\
     \end{cases}
\]
Then, 
\begin{equation*}
\begin{aligned}
  \textrm{max}(x, y)  &= \textrm{b}(x,y) \cdot (\textrm{ReLU}(x-y) + y) \\
                      & + (1-\textrm{b}(x,y)) \cdot
(\textrm{ReLU}(x)+\textrm{ReLU}(y)).
\end{aligned}
\end{equation*}
Let $pf_{\alpha}^{\beta}(\cdot)$ denote the point function that outputs $\beta$ at the
point $\alpha$ and zero otherwise.
Then, given the value of $\textrm{ReLU}(x)+\textrm{ReLU}(y)$, one can see that
$pf_{0}^{1}(\textrm{ReLU}(x)+\textrm{ReLU}(y))$ equals 1 if both $x$ and $y$ are
negative and zero otherwise. Similarly, $pf_{0}^{1}(\textrm{ReLU}(x)+
\textrm{ReLU}(y)-x-y)$ equals 1 if both $x$ and $y$ are positive and zero otherwise.
Hence, $\textrm{b}(x, y)$ can be defined as
\begin{equation*}
  \begin{aligned}
    \textrm{b}(x,y) &= pf_{0}^{1}(\textrm{ReLU}(x)+\textrm{ReLU}(y)) \\
            &+ pf_{0}^{1}(\textrm{ReLU}(x)+ \textrm{ReLU}(y)-x-y).
  \end{aligned}
\end{equation*}
With the above definition of the function $b(\cdot, \cdot)$, \sys can express
$\textrm{max}$ using point and interval functions. 
Note that \sys evaluates max over two rounds. 
In the first round, it computes the shares of the inner parts of the functions
(that is, $\textrm{ReLU}(x-y) + y$ and
$\textrm{ReLU}(x)+\textrm{ReLU}(y)$), while in the second round, it computes the 
    outer parts of the functions.

\subsection{Argmax}
\label{a:argmax}

Like for Maxpool, \sys's goal is to express Argmax in terms of point and interval functions. 

Recall that the Argmax function outputs the index
    of the maximum entry in an array.
Let the inputs to argmax be $x_i$ for $i \in [1, n]$ and let
    $\mu$ equal the biggest value, 
        that is, $\mu = \textrm{max}(\{x_i\})$, computed using
MaxPool (Appendix~\ref{a:max}).
Then, one can express the output of argmax 
    as an array of $n$ entries, 
    with 0 at the $i$-th entry if $x_i \neq \mu$, and  
    the index $i$ at the $i$-th entry if $x_i = \mu$. 

Let $pf_{\alpha}^{\beta}(\cdot)$ denote the point function that outputs $\beta$
at the point $\alpha$ and zero otherwise.
Then, we define $i$-th entry of the Argmax's output as $pf_0^{i}(x_i - \mu)$,
where $\mu = \textrm{max}(\{x_i\})$.
\newcommand{\calU}{\mathcal{U}}
\section{Security of the distributing-trust step}

\begin{lemma}
\label{a:prf-proof}
\label{prf-proof}
    Let $G\colon X\to Y$ be a PRG. Define $G'\colon X \times X \times X \to Y$  as $G'(x_1,x_2,x_3)=G(x_1)\oplus G(x_2)\oplus G(x_3)$. Then, the following holds: for every $i,j \in \{1,2,3\}$ and $i \neq j$, 
    $$\left\{ \left(G'(x_1,x_2,x_3),\ x_i,x_j  \right) \right\}_{x_1,x_2,x_3 \xleftarrow{\$} X} \approx_c \left\{ \left(u,\ x_i,x_j \right) \right\}_{\substack{x_i,x_j \xleftarrow{\$} X\\ u \xleftarrow{\$} \calU}} $$
    ($\approx_c$ denotes computational indistinguishability).
   \par In other words, $G'$ is a PRG and moreover, is secure even if any two
blocks of the seed of the PRG is revealed to a distinguisher. 
    \end{lemma}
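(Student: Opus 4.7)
The plan is to reduce the security of $G'$ under the leakage of two seed blocks to the standard pseudorandomness of $G$ via a simple one-step hybrid argument. Without loss of generality, fix any $i, j \in \{1,2,3\}$ with $i \neq j$, and let $k$ denote the unique remaining index, i.e., $\{i,j,k\} = \{1,2,3\}$. The key observation is that since $x_k$ is sampled uniformly at random from $X$ and is independent of $x_i, x_j$, applying PRG security of $G$ to $x_k$ gives us that $G(x_k)$ is computationally indistinguishable from uniform, \emph{even when conditioned on} $x_i, x_j$.

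First I would define the intermediate hybrid distribution
\[
H_1 \;=\; \bigl\{\,(u \oplus G(x_i) \oplus G(x_j),\ x_i,\ x_j)\,\bigr\}_{x_i, x_j \xleftarrow{\$} X,\ u \xleftarrow{\$} \calU},
\]
where $u$ replaces $G(x_k)$. The left-hand distribution of the lemma, call it $H_0$, is
\[
H_0 \;=\; \bigl\{\,(G(x_k) \oplus G(x_i) \oplus G(x_j),\ x_i,\ x_j)\,\bigr\}_{x_1, x_2, x_3 \xleftarrow{\$} X}.
\]
I would show $H_0 \approx_c H_1$ by a direct reduction: any distinguisher $D$ between $H_0$ and $H_1$ can be turned into a distinguisher $D'$ against $G$ that, on input $w \in Y$, samples $x_i, x_j \xleftarrow{\$} X$ by itself, computes $G(x_i), G(x_j)$, and invokes $D(w \oplus G(x_i) \oplus G(x_j),\ x_i,\ x_j)$. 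When $w = G(x_k)$ for fresh random $x_k$, the simulation yields $H_0$ exactly; when $w$ is uniform, it yields $H_1$. Thus $D'$'s advantage equals $D$'s advantage, which must be negligible by PRG security of $G$.

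Next I would argue that $H_1$ is \emph{statistically} identical to the right-hand distribution of the lemma. In $H_1$, the uniform string $u \in \calU$ is sampled independently of $(x_i, x_j)$; since $G(x_i) \oplus G(x_j)$ is a deterministic function of $(x_i, x_j)$, the masked value $u \oplus G(x_i) \oplus G(x_j)$ is uniform on $\calU$ and independent of $(x_i, x_j)$. Here I am implicitly using that $(\calU, \oplus)$ is a group and $u$ ranges uniformly over it, so translation by a fixed element preserves the uniform distribution. Chaining the two steps gives $H_0 \approx_c H_1 \equiv \{(u, x_i, x_j)\}$, as desired.

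There is no real obstacle beyond care with quantifier order: the reduction must sample $x_i, x_j$ \emph{inside} the adversary $D'$ so that the PRG challenge $w$ is independent of them, which ensures that the view handed to $D$ has the correct joint distribution in both the pseudorandom and uniform cases. Since the argument does not depend on which of the three indices is hidden, the statement holds for every choice of $i \neq j$, completing the proof.
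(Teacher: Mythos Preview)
Your proof is correct and follows essentially the same approach as the paper: replace $G(x_k)$ by a uniform $u$ via a direct reduction to the PRG security of $G$, then observe that XOR-ing a uniform string with a fixed value leaves the distribution uniform and independent of $(x_i,x_j)$. Your write-up is somewhat more explicit about the reduction, but the structure and ideas are identical.
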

    
  \begin{proof}
  It suffices to prove the case when $i=2,j=3$ and the other cases follow
symmetrically. From the security of $G$, which says that an output of $G$ is
    indistinguishable from an element sampled uniformly at random, the following holds: 
  \begin{center} 
   $\left\{ \left(G(x_1) \oplus G(x_2) \oplus G(x_3),\ x_2,x_3  \right) \right\}_{x_1,x_2,x_3 \xleftarrow{\$} X}$ \\
    $ \approx_c \left\{ \left(u \oplus G(x_2) \oplus G(x_3),\ x_2,x_3 \right) \right\}_{\substack{x_2,x_3 \xleftarrow{\$} X\\ u \xleftarrow{\$} \calU}} $ \\
  \end{center}
   Since XOR-ing the uniform distribution with any fixed value still gives the same distribution, we have the following: 
    $$\left\{ \left(u \oplus G(x_2) \oplus G(x_3),\ x_2,x_3 \right) \right\}_{\substack{x_2,x_3 \xleftarrow{\$} X\\ u \xleftarrow{\$} \calU}}\equiv  \left\{ \left(u,\ x_2,x_3 \right) \right\}_{\substack{x_2,x_3 \xleftarrow{\$} X\\ u \xleftarrow{\$} \calU}} $$
    ($\equiv$ denotes perfect indistinguishability). 
    Combining the above two observations, we have the proof of the lemma.
  \end{proof}
  
  \begin{theorem} 
  \label{gen-theorem}
  Let $\func_{gen}: (1^{\lambda}, b, \widehat{f}_{\myvec{r}}) \mapsto k_b$ for $b \in \{0, 1\}$ denote the functionality (from \S\ref{s:fsskeygeneration}) that outputs BGI keys to $M_b$. For any PPT adversary $\adversary$ corrupting $M_b, T_b^{(i)}$ for $i \in \{0, 1, 2\}$, with access to TEEs $T_b^{(j)}, T_b^{(k)}$, for $j, k \in \{0, 1, 2\}$ and $j \neq i, k \neq i$, for every large enough security parameter $\lambda$, there exists a PPT simulator $\simr$ such that the following holds:
  
  For every $b \in \{0, 1\}$, randomness $r_b \in \{0, 1\}^{poly(\lambda)}$ $$\view_{\adversary}^{\func_b^{(j)}, \func_b^{(k)}}(1^{\lambda}, b; r) \approx_{c, \epsilon} Sim(1^{\lambda}, b, r, k_b)$$
  \end{theorem}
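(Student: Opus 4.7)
The plan is to construct a PPT simulator $\simr$ through a sequence of hybrid games $H_0, H_1, H_2, H_3$, where $H_0$ is the real view and each successive hybrid modifies one structural component while remaining computationally indistinguishable. In $H_0$, the adversary's view contains (i) the corrupted randomness $r$ used by $M_b$ and $T_b^{(i)}$; (ii) all messages $T_b^{(i)}$ receives from the honest TEEs during the Diffie-Hellman setup, the sampling of $\myvec{r}$, and the 3-party MPC subprotocol (drawn from ABY\textsuperscript{3}) that implements the non-PRG portions of \fssgen, i.e.\ the correction steps and the XOR-share handling; (iii) for each invocation of the PRG optimization the two seed blocks $x_j[i]$ and $x_k[i]$ sent by the honest TEEs to $T_b^{(i)}$; and (iv) the final key $k_b$ delivered to $M_b$.

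In $H_1$ the messages that $T_b^{(i)}$ receives from the honest TEEs inside the 3-party MPC subprotocol are replaced by messages produced by the semi-honest ABY\textsuperscript{3} simulator against a single corruption, instantiated with $T_b^{(i)}$'s local input shares and its nominal output share. Indistinguishability $H_0 \approx_c H_1$ follows directly from the semi-honest security of ABY\textsuperscript{3}. In $H_2$, for every node of the FSS binary tree at which the outsourced-PRG step is invoked, the contributions $g(x[j])$ and $g(x[k])$ that the honest TEEs feed into the next MPC step are replaced by uniformly random strings of the correct length. The key observation is that within one PRG invocation on seed $x = x[0]\,\|\,x[1]\,\|\,x[2]$, the corrupted $T_b^{(i)}$ only ever learns $x[i]$ (which it reconstructs by XOR-ing its own $x_i[i]$ with the received $x_j[i]$ and $x_k[i]$), while the blocks $x[j]$ and $x[k]$ remain entirely hidden, because the cross-blocks $x_j[k]$ and $x_k[j]$ are exchanged between the two honest TEEs only. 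Hence Lemma~\ref{prf-proof} (whose hypothesis reveals two of three seed blocks, strictly more leakage than we incur) implies that $g(x[0]) \oplus g(x[1]) \oplus g(x[2])$, and therefore each individual contribution modulo the distribution of the others, is computationally indistinguishable from uniform given $x[i]$. A standard hybrid over the polynomially many PRG invocations in $\fssgen$ gives $H_1 \approx_c H_2$.

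In $H_3$ the simulator programs $k_b$ into the nominal-output argument fed to the ABY\textsuperscript{3} simulator so that the key reconstructed by $M_b$ at the end of the view equals the supplied $k_b$. Because in $H_2$ every honest-TEE contribution and every MPC message is distributed uniformly and independently of $\widehat{f}_{\myvec{r}}$, this reprogramming is distribution-preserving, so $H_2 \equiv H_3$; and $H_3$'s distribution depends only on $(1^{\lambda}, b, r, k_b)$. Defining $\simr$ to output $H_3$ and chaining the hybrids yields $\view_{\adversary}^{\func_b^{(j)},\func_b^{(k)}}(1^{\lambda}, b; r) \approx_{c,\varepsilon} \simr(1^{\lambda}, b, r, k_b)$.

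The main obstacle will be the bookkeeping in $H_2$: one must argue, across every expanded node and through the composition with the ABY\textsuperscript{3} simulator introduced in $H_1$, that the corrupted TEE's view never accumulates a second block of any individual PRG seed. This requires (a) noting that every fresh seed inside \fssgen is sampled independently and consumed by exactly one PRG call, so the single-invocation application of Lemma~\ref{prf-proof} composes cleanly across nodes via a standard sequence of intermediate sub-hybrids, and (b) verifying that the correction-step messages seen by the corrupted TEE, once replaced by the ABY\textsuperscript{3} simulator's outputs, are independent of the real PRG outputs being swapped out --- otherwise the reduction to the PRG security could be spoiled by indirect leakage of a second seed block.
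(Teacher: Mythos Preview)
Your proposal is correct and follows the same approach as the paper, which in fact offers only a one-sentence remark in lieu of a proof: ``The proof essentially follows from Lemma~\ref{prf-proof} and the simulation security of the $\mathrm{ABY}^{3}$ protocol in the honest-majority setting.'' Your hybrid sequence is a faithful (and considerably more detailed) unpacking of exactly these two ingredients: you invoke the ABY\textsuperscript{3} simulator for the correction-step MPC rounds and Lemma~\ref{prf-proof} for the outsourced-PRG steps, which is precisely what the paper gestures at.

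One small remark on your $H_2$: you phrase the switch as replacing the honest TEEs' individual contributions $g(x[j])$ and $g(x[k])$ by uniform strings, and then justify this via Lemma~\ref{prf-proof}, which speaks only about the XOR $g(x[0]) \oplus g(x[1]) \oplus g(x[2])$. Since after $H_1$ the ABY\textsuperscript{3} messages are already simulated from the corrupted party's input/output shares, the only place the honest contributions enter is through the value $G(x)$ that determines subsequent correction words; so it is cleaner (and sufficient) to argue directly that $G(x)$ can be swapped for uniform given the single block $x[i]$ the adversary holds. This is of course implied by the lemma (which even allows revealing two blocks), so your argument is sound, but the intermediate claim about ``each individual contribution modulo the distribution of the others'' is a slight over-statement of what you need or what the lemma gives.
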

  
  \paragraph{Remark.} The proof essentially follows from Lemma~\ref{prf-proof} and the simulation security of the $\mathrm{ABY}^{3}$ protocol in the honest-majority setting.

\section{Security Proof}

\label{a:proof}
We first present the proof for the single-TEE case and later, we show how to extend this proof to the multiple-TEE setting. 

\subsection{Single-TEE}
\label{a:single-tee-proof}

Suppose there exists a PPT adversary $\adversary$ that compromises the general purpose machine $M_b$ and TEE machine $T_b$ for some $b \in \{0, 1\}$. We then construct a PPT simulator $\simr$ as follows. 

\begin{myenumerate}
    \item $\simr$ chooses a uniform random tape for $M_b$, $T_b$.

    \item \textbf{In setup phase:}
    \begin{myenumerate}
    	\item $\simr$ simulates the Diffie-Hellman protocol.
    
        \item $\simr$ runs the corresponding simulator of Yao's protocol (see \cite{lindell2008yao} for its description) such that the simulated circuit outputs the simulated share $(pk,e_b)$, of the HSS keys, to machine $M_b$.
    \end{myenumerate}
    
    \item \textbf{In model-loading phase:} for every layer in the model,
    \begin{myenumerate}
        \item $\simr$ runs the corresponding simulator of Yao's protocol \cite{lindell2008yao}; the simulated garbled circuit is programmed to output an element in $R_q^2$, chosen uniformly at random, to machine $M_b$.

        \item $\simr$ picks a matrix in $\mathbb{Z}_p^{m \times n}$, uniformly at random, for layer parameters, and sends it to machine $M_b$. 
    \end{myenumerate}

    \item \textbf{In preprocessing phase:} for every layer in the model,
    
    \begin{myenumerate}
        \item Simulator $\simr$ gets the public key $pk$ from the simulated setup phase. For every $i \in \{1,\ldots,m\}$, it sends encryption of $0$ to $M_b$.
        
        \item $\simr$ computes the simulator of the BGI protocol, on input $(1^{\lambda},b)$, to obtain the simulated key $k_b$. The functionality inside $T_b$ will now output the key $k_b$.
        
    \end{myenumerate}

    \item \textbf{In online phase:}
    \begin{myenumerate}
        \item Simulator $\simr$ sends shares $sh_b^{\myvec{0}}$, chosen uniformly at random, to machine $M_b$.
        
        \item For every layer of the ML model, $\simr$ computes the simulator of the secure Beaver multiplication.
        
        \item At the end of every execution of the FSS protocol (one per non-linear layer), the simulator sends a share, chosen uniformly at random,  to $M_b$.  
    \end{myenumerate}
\end{myenumerate}

\noindent We show that the real world distributions is computationally
indistinguishable to the simulated distributions via the standard hybrid argument.

\begin{myenumerate}
    \item $Hyb_0:$ This corresponds to the real world distribution where the model-owner, the datapoint-owner, secure hardware machine $T_b$, and general-purpose machine $M_{1-b}$ execute the system as mentioned in the description of the protocol.

    \item $Hyb_1$: In this hybrid, the simulator for the Diffie-Hellman key exchange is executed. $Hyb_0 \approx_c Hyb_1$ follows from the simulation security of the DH key exchange. 

    \item $Hyb_2:$ In this hybrid, we call the simulator of Yao's protocol on input $1^ {\lambda}$ and the circuit $Cir_{setup}$ that outputs honestly generated BKS-LPR keys: $pk, e_b \in_{R} R_q^2$. $Hyb_1 \approx_{c} Hyb_2$ follows from the simulation security of Yao's protocol.

    \item $Hyb_3:$ In this hybrid, we modify $Cir_{setup}$ such that it outputs simulated BKS-LPR keys: $pk, e_b \in_{R} R_q^2$ to machine $M_b$. $Hyb_2 \approx_{c} Hyb_3$ follows from the fact that honestly generated BKS-LPR key is indistinguishable from the simulated key.

    \item $Hyb_4$: In this hybrid, we again call the simulator of Yao's protocol $Sim_{Yao}$ on input $sh_b^{\mymatrix{B}[i]}$ and the circuit $Cir_{transform}$ that outputs additive secret shares $sh_b^{B[i] \cdot \myvec{s}} \in R_q^2$ for each $i$-th column $\in \{1,...,m\}$ to machine $M_b$. $Hyb_3 \approx_{c} Hyb_4$ due to the simulation security of Yao's protocol.

    \item $Hyb_5$: In this hybrid, we modify $Cir_{transform}$ such that it outputs a value in $R_q^2$, chosen uniformly at random, for each $i$-th column $\in \{1,...,m\}$ to machine $M_b$. $Hyb_4$ is identical to $Hyb_5$ due to the perfect security of the secret shares.

    \item $Hyb_6$: In this hybrid, we change the inputs sent by the model-owner and the machine $M_{1-b}$ to the general-purpose machine $M_b$. Instead of sending secret shares of the model parameters $\mymatrix{Y}$, model-owner sends a value in $ \mathbb{Z}_p^{n \times m}$, chosen uniformly at random. Similarly, instead of sending secret shares $sh_{b - 1}^{(\mymatrix{Y - B})}$, machine $M_{b - 1}$ sends a value in $\mathbb{Z}_p^{m \times n}$, chosen uniformly at random. $Hyb_5$ is identical to $Hyb_6$ due to the perfect security of additive secret sharing scheme.

    \item $Hyb_{7.(j)}$ for $j^{th}$ layer: In this hybrid, we change the ciphertext sent by the machine $M_{b - 1}$ to machine $M_b$ in the execution of the $j^{th}$ layer. Specifically, instead of sending encryption of $sh_{b - 1}^{\myvec{a}}$, machine $M_{b - 1}$ sends encryption of $\myvec{0} \in \mathbb{Z}_p^{1 \times n}$. 
    \par The following holds from the semantic security of the LPR scheme: (i) for $j \in \{1,\ldots,L-1\}$, where $L$ is the number of layers, $Hyb_{7.(j)} \approx_{c} Hyb_{7.(j+1)}$ and, (iii) $Hyb_{6} \approx_c Hyb_{7.(1)}$.
    
    \item $Hyb_8$: In this hybrid, we invoke the simulator of BGI protocol to obtained simulated key $k_b$ that is then sent to $M_b$. $Hyb_8 \approx_c Hyb_{7.(L)}$ due to the security of the BGI protocol.

	\item $Hyb_9$: This corresponds to the output distribution of $\simr$.
	\par Hybrids $Hyb_8$ and $Hyb_9$ are identically distributed.
\end{myenumerate}

\subsection{Multiple-TEE}
\label{a:multiple-tee-proof}
\noindent We now focus on the setting when there are multiple TEEs.

Suppose there exists a PPT adversary $\adversary$ that compromises the general purpose machine $M_b$ and TEE machine $T_b^{(i)}$ for some $b \in \{0, 1\}, i \in \{0, 1, 2\}$. Let $j,k$ be such that $j \neq i$ and $j \neq k$. We then construct a PPT simulator $\simr$ as follows.

\begin{myenumerate}
    \item $\simr$ chooses a uniform random tape for $M_b$, $T_b^{(i)}$. $\simr$ simulates the other TEEs $T_{b}^{j},T_b^{k}$, where $i \neq j \wedge i \neq k$. 

    \item \textbf{In setup phase:} execute the setup phase of the simulator described in the single-TEE setting. 
    
    \item \textbf{In model-loading phase:} execute the model-loading phase of the simulator in the single-TEE setting. 

    \item \textbf{In preprocessing phase:} for every layer in the ML model,
    \begin{myenumerate}
        \item Same as the Single-TEE case.

        \item $\simr$ runs the simulator of Theorem~\ref{gen-theorem} and outputs simulated BGI key $k_b$ to the machine $M_b$.
    \end{myenumerate}

    \item \textbf{In online phase:} execute the online phase of the simulator described in the single-TEE setting. 
\end{myenumerate}

\noindent We show that the real world distributions is computationally
indistinguishable to the simulated distributions via the standard hybrid argument.

\begin{myenumerate}
    \item $Hyb_0:$ This corresponds to the real world distribution where the model-owner, the datapoint-owner, secure hardware machine $T_b^{j}, T_b^{k}$, and general-purpose machine $M_{1-b}$ execute the system as mentioned in the description of the protocol.

    \item $Hyb_{1.(i, 1)}$: In this hybrid, the $\func_{gen}$ functionality for generating the BGI keys for the $i^{th}$ layer is simulated using the simulator of $\func_{gen}$ from Theorem~\ref{gen-theorem}. The simulator of $\func_{gen}$ receives as input the BGI key $k_b$ for the $i^{th}$ layer.
    
    \item $Hyb_{0.(i,2)}$: In this hybrid, the simulator for the $i^{th}$ layer $\func_{gen}$ protocol receives as input $k_b$, where $k_b$ is generated by computing the simulator of the BGI scheme. 
    \par $Hyb_{0.(i,1)} \approx_c Hyb_{0.(i,2)}$ follows from the security of the BGI scheme. $Hyb_{0.(i,2)} \approx_{c} Hyb_{0.(i+1,1)}$, for $i < L$ where $L$ is the number of the layers in the ML model, follows from the simulation security of the $\func_{gen}$ protocol.
    
    \item Hybrids $Hyb_1$ to $Hyb_8$ are the same as described in the single-TEE setting. 
    \par $Hyb_{0.(L,2)} \approx_c Hyb_{1}$ follows from the simulation security of the DH key exchange.   
\end{myenumerate}
\section{Workloads}
\label{a:adx_models}
\label{a:models}

M-FNN consists of 3 dense (that is, fully-connected) layers with ReLU activations for 
    each layer~\cite{mohassel2017secureml, kumar2020cryptflow}. 
    I-FNN is composed of two dense layers and 
uses Tanh activations~\cite{rouhani2018deepsecure}. 
M-CNN consists of one convolution layer with 5x5 filters, 
    one average pooling layer with  2x2 pool size, 
        and two dense layers that use ReLU activations~\cite{liu2017oblivious}. 
C10-CNN  uses seven convolutional layers with 3x3 and 1x1 filters, 
        two average pooling layers with 2x2 pool 
size, and one fully-connected output layer~\cite{liu2017oblivious, mishra2020delphi}. 
Finally, C100-R32 uses 32 convolutional layers with 3x3 filters, 30 ReLU activation layers, 
    several add layers (used by shortcut paths), one global average pooling
layer, and one fully-connected output layer~\cite{mishra2020delphi}.
We omit stride and padding sizes of convolution layers here and refer the reader to prior work for their details.

\section{Pricing model}
\label{a:pricingmodel}

For network transfers, we directly use the prices reported
    by the cloud providers for inbound, outbound, and local data transfers.
To get the hourly \cpu cost, we take the hourly machine cost and split it into \cpu
cost and memory cost by making a simplifying assumption that two-third of the
total machine cost is due to \cpu and one-third is due to memory.
    In reality, pricing resources is an involved task that also depends on factors such as
business demand~\cite{awssagemakerpricing, awsinstancepricing}. Therefore, our derived resource prices should only be treated as estimates.

\emparagraph{Network pricing.}
Azure and AWS both do not charge for inbound traffic or local network transfers.
For outbound traffic, both these providers charge at 
    $\$0.5$ per GB~\cite{awsnetworkpricing, azurenetworkpricing}.

\emparagraph{\cpu pricing.}
    The hourly prices of \texttt{m5.4xlarge}, \texttt{D16s-v3}, \texttt{L8s-v2},
and \texttt{DC1s-v2} when reserved for three years are 
    $\$0.337$, $\$0.406$, $\$0.264$, and $\$0.119$, respectively~\cite{awsinstancepricing, azureinstancepricing}. Combining this 
    data with the specifications of these machines (that is, the number of \cpus
and amount of RAM), and the method described
above, we get the following per hour \cpu cost: $\$0.015$ for
\texttt{m5.4xlarge}, $\$0.017$ for \texttt{D16s-v3}, $\$0.022$ for
\texttt{L8s-v2}, and $\$0.079$ for the SGX-enabled \texttt{DC1s-v2}.

\frenchspacing

\section*{Acknowledgments}
We thank
Ishtiyaque Ahmad,
Alvin Glova,
Rakshith Gopalakrishna,
Arpit Gupta,
Abhishek Jain,
Srinath Setty,
Jinjin Shao,
Tim Sherwood,
Michael Walfish,
and Rich Wolski for feedback and comments that improved this draft.

{
\footnotesize
\begin{flushleft}
\setlength{\parskip}{0pt}
\setlength{\itemsep}{0pt}
\bibliographystyle{abbrv}
\bibliography{conferences,paper}

\begin{thebibliography}{100}

\bibitem{abdel2014convolutional}
O.~Abdel-Hamid, A.-r. Mohamed, H.~Jiang, L.~Deng, G.~Penn, and D.~Yu.
\newblock Convolutional neural networks for speech recognition.
\newblock {\em IEEE/ACM Transactions on audio, speech, and language
  processing}, 22(10):1533--1545, 2014.

\bibitem{HomomorphicEncryptionSecurityStandard}
M.~Albrecht, M.~Chase, H.~Chen, J.~Ding, S.~Goldwasser, S.~Gorbunov, S.~Halevi,
  J.~Hoffstein, K.~Laine, K.~Lauter, S.~Lokam, D.~Micciancio, D.~Moody,
  T.~Morrison, A.~Sahai, and V.~Vaikuntanathan.
\newblock Homomorphic encryption security standard.
\newblock Technical report, HomomorphicEncryption.org, November 2018.

\bibitem{amazontransparency}
{Amazon}.
\newblock Law enforcement information requests.
\newblock
  \url{https://www.amazon.com/gp/help/customer/display.html?nodeId=GYSDRGWQ2C2CRYEF}.

\bibitem{amazonml}
{Amazon}.
\newblock Machine learning on {AWS}-{A}mazon {W}eb {S}ervices.
\newblock \url{https://aws.amazon.com/machine-learning/}.

\bibitem{awssagemakerpricing}
{Amazon Web Services}.
\newblock {Amazon Sagemaker Pricing}.
\newblock \url{https://aws.amazon.com/sagemaker/pricing/}.

\bibitem{awsnetworkpricing}
{Amazon Web Services}.
\newblock {Network Pricing}.
\newblock \url{https://aws.amazon.com/ec2/pricing/on-demand/}.

\bibitem{awsinstancepricing}
{Amazon Web Services}.
\newblock {Reserved Instance Pricing}.
\newblock \url{https://aws.amazon.com/ec2/pricing/reserved-instances/pricing/}.

\bibitem{amin1997piecewise}
H.~Amin, K.~M. Curtis, and B.~R. Hayes-Gill.
\newblock Piecewise linear approximation applied to nonlinear function of a
  neural network.
\newblock {\em IEE Proceedings-Circuits, Devices and Systems}, 144(6):313--317,
  1997.

\bibitem{arthur2015practical}
W.~Arthur and D.~Challener.
\newblock {\em A Practical Guide to TPM 2.0: Using the Trusted Platform Module
  in the New Age of Security}.
\newblock Apress, 2015.

\bibitem{badawi2018alexnet}
A.~A. Badawi, J.~Chao, J.~Lin, C.~F. Mun, S.~J. Jie, B.~H.~M. Tan, X.~Nan,
  K.~M.~M. Aung, and V.~R. Chandrasekhar.
\newblock The {AlexNet} moment for homomorphic encryption: {HCNN}, the first
  homomorphic {CNN} on encrypted data with {GPUs}.
\newblock {\em arXiv preprint 1811.00778}, 2018.

\bibitem{ball2019garbled}
M.~Ball, B.~Carmer, T.~Malkin, M.~Rosulek, and N.~Schimanski.
\newblock Garbled neural networks are practical.
\newblock Cryptology ePrint Archive, Report 338.

\bibitem{cloudhopper}
R.~Barry and D.~Volz.
\newblock {Ghosts in the Clouds: Inside China's Major Corporate Hack}.
\newblock {\em The Wall Street Journal}, Dec. 2019.
\newblock
  \url{https://www.wsj.com/articles/ghosts-in-the-clouds-inside-chinas-major-corporate-hack-11577729061}.

\bibitem{beaver1991efficient}
D.~Beaver.
\newblock Efficient multiparty protocols using circuit randomization.
\newblock In {\em CRYPTO}, 1991.

\bibitem{boemer2018ngraph}
F.~Boemer, Y.~Lao, and C.~Wierzynski.
\newblock {nGraph-HE: A graph compiler for deep learning on homomorphically
  encrypted data}.
\newblock {\em arXiv preprint 1810.10121}, 2018.

\bibitem{bond2017vale}
B.~Bond, C.~Hawblitzel, M.~Kapritsos, K.~R.~M. Leino, J.~R. Lorch, B.~Parno,
  A.~Rane, S.~Setty, and L.~Thompson.
\newblock Vale: Verifying high-performance cryptographic assembly code.
\newblock In {\em USENIX Security}, 2017.

\bibitem{bourse2018fast}
F.~Bourse, M.~Minelli, M.~Minihold, and P.~Paillier.
\newblock Fast homomorphic evaluation of deep discretized neural networks.
\newblock In {\em CRYPTO}, 2018.

\bibitem{boyle2015function}
E.~Boyle, N.~Gilboa, and Y.~Ishai.
\newblock Function secret sharing.
\newblock In {\em EUROCRYPT}, 2015.

\bibitem{boyle2016breaking}
E.~Boyle, N.~Gilboa, and Y.~Ishai.
\newblock Breaking the circuit size barrier for secure computation under {DDH}.
\newblock In {\em CRYPTO}, 2016.

\bibitem{boyle2016function}
E.~Boyle, N.~Gilboa, and Y.~Ishai.
\newblock Function secret sharing: Improvements and extensions.
\newblock In {\em ACM CCS}, 2016.

\bibitem{boyle2019secure}
E.~Boyle, N.~Gilboa, and Y.~Ishai.
\newblock Secure computation with preprocessing via function secret sharing.
\newblock In {\em TCC}, 2019.

\bibitem{boyle2018foundations}
E.~Boyle, N.~Gilboa, Y.~Ishai, H.~Lin, and S.~Tessaro.
\newblock Foundations of homomorphic secret sharing.
\newblock In {\em Innovations in Theoretical Computer Science Conference
  (ITCS)}, 2018.

\bibitem{boyle2019homomorphic}
E.~Boyle, L.~Kohl, and P.~Scholl.
\newblock Homomorphic secret sharing from lattices without {FHE}.
\newblock In {\em EUROCRYPT}, 2019.

\bibitem{brasser2017software}
F.~Brasser, U.~M{\"u}ller, A.~Dmitrienko, K.~Kostiainen, S.~Capkun, and A.-R.
  Sadeghi.
\newblock {Software grand exposure: SGX cache attacks are practical}.
\newblock In {\em USENIX Workshop on Offensive Technologies}, 2017.

\bibitem{brutzkus2019low}
A.~Brutzkus, O.~Elisha, and R.~Gilad-Bachrach.
\newblock Low latency privacy preserving inference.
\newblock In {\em Int. Conference on Machine Learning}, 2019.

\bibitem{tftrusted}
{Cape Privacy (formerly Dropout Labs)}.
\newblock {TF-trusted allows you to run TensorFlow models in secure enclaves}.
\newblock \url{https://github.com/capeprivacy/tf-trusted}.

\bibitem{chabanne2017privacy}
H.~Chabanne, A.~de~Wargny, J.~Milgram, C.~Morel, and E.~Prouff.
\newblock Privacy-preserving classification on deep neural network.
\newblock Cryptology ePrint Archive, Report 35, 2017.

\bibitem{chen2019secure}
V.~Chen, V.~Pastro, and M.~Raykova.
\newblock Secure computation for machine learning with {SPDZ}.
\newblock In {\em NIPS}, 2018.

\bibitem{chevalier2009optimal}
C.~Chevalier, P.-A. Fouque, D.~Pointcheval, and S.~Zimmer.
\newblock Optimal randomness extraction from a {D}iffie-{H}ellman element.
\newblock In {\em EUROCRYPT}, 2009.

\bibitem{chou2018faster}
E.~Chou, J.~Beal, D.~Levy, S.~Yeung, A.~Haque, and L.~Fei-Fei.
\newblock {Faster CryptoNets}: Leveraging sparsity for real-world encrypted
  inference.
\newblock {\em arXiv preprint 1811.09953}, 2018.

\bibitem{ciscobackdoor}
C.~Cimpanu.
\newblock {Cisco removed its seventh backdoor account this year, and that's a
  good thing}.
\newblock {\em ZDNet Zero Day}, Nov. 2018.
\newblock
  \url{https://www.zdnet.com/article/cisco-removed-its-seventh-backdoor-account-this-year-and-thats-a-good-thing/}.

\bibitem{roguesteffan}
G.~Corfield.
\newblock {Vengeful sacked IT bod destroyed ex-employer's AWS cloud accounts.
  Now he'll spent rest of 2019 in the clink}.
\newblock {\em The Register}, Mar. 2019.
\newblock
  \url{https://www.theregister.co.uk/2019/03/20/steffan_needham_aws_rampage_prison_sentence_voova/}.

\bibitem{costan2016intel}
V.~Costan and S.~Devadas.
\newblock {I}ntel {SGX} explained.
\newblock Cryptology ePrint Archive, Report 86, 2016.

\bibitem{damgaard2012multiparty}
I.~Damg{\aa}rd, V.~Pastro, N.~Smart, and S.~Zakarias.
\newblock Multiparty computation from somewhat homomorphic encryption.
\newblock In {\em CRYPTO}, 2012.

\bibitem{diffie1976new}
W.~Diffie and M.~Hellman.
\newblock New directions in cryptography.
\newblock {\em IEEE transactions on Information Theory}, 22(6):644--654, 1976.

\bibitem{dua2017isolet}
D.~Dua and C.~Graff.
\newblock {ISOLET} datasets.
\newblock \url{http://archive.ics.uci.edu/ml/datasets/ISOLET}, 2017.

\bibitem{militarybackdoor}
T.~Espiner.
\newblock {FBI fears hardware backdoors in US military kit}.
\newblock {\em ZDNet}, May 2008.
\newblock
  \url{https://www.zdnet.com/article/fbi-fears-hardware-backdoors-in-us-military-kit/}.

\bibitem{fazio2017homomorphic}
N.~Fazio, R.~Gennaro, T.~Jafarikhah, and W.~E. Skeith.
\newblock Homomorphic secret sharing from {Paillier} encryption.
\newblock In {\em International Conference on Provable Security}, 2017.

\bibitem{fischer2020computation}
A.~Fischer, B.~Fuhry, F.~Kerschbaum, and E.~Bodden.
\newblock Computation on encrypted data using dataflow authentication.
\newblock {\em PETS}, 2020(1):5--25, 2020.

\bibitem{roguetechspot}
W.~Gayde.
\newblock {Rogue system admin shuts down servers and deletes core files on the
  day he is fired, now faces up to 10 years in prison}.
\newblock {\em Techspot}, Mar. 2017.
\newblock
  \url{https://www.techspot.com/news/68753-rogue-system-administrator-shuts-down-servers-deletes-core.html}.

\bibitem{genkin2015stealing}
D.~Genkin, L.~Pachmanov, I.~Pipman, and E.~Tromer.
\newblock {Stealing keys from PCs using a radio: Cheap electromagnetic attacks
  on windowed exponentiation}.
\newblock In {\em Conference on cryptographic hardware and embedded systems
  (CHES)}, 2015.

\bibitem{genkin2015get}
D.~Genkin, I.~Pipman, and E.~Tromer.
\newblock {Get your hands off my laptop: Physical side-channel key-extraction
  attacks on PCs}.
\newblock {\em Journal of Cryptographic Engineering}, 5(2):95--112, 2015.

\bibitem{gilad2016cryptonets}
R.~Gilad-Bachrach, N.~Dowlin, K.~Laine, K.~Lauter, M.~Naehrig, and J.~Wernsing.
\newblock Crypto{N}ets: {A}pplying neural networks to encrypted data with high
  throughput and accuracy.
\newblock In {\em Int. Conference on Machine Learning}, 2016.

\bibitem{goldreich1987play}
O.~Goldreich, S.~Micali, and A.~Wigderson.
\newblock How to play any mental game.
\newblock In {\em ACM STOC}, 1987.

\bibitem{goodfellow2016deep}
I.~Goodfellow, Y.~Bengio, and A.~Courville.
\newblock {\em Deep learning}.
\newblock MIT press, 2016.

\bibitem{asylo}
{Google}.
\newblock {Asylo: An open and flexible framework for enclave applications}.
\newblock \url{https://github.com/google/asylo}.

\bibitem{googletransparency}
{Google}.
\newblock Government requests for cloud customer data.
\newblock \url{https://cloud.google.com/security/transparency/govt-requests}.

\bibitem{googlemlpredictionapi}
{Google Cloud}.
\newblock {Method: projects.predict | AI Platform Prediction}.
\newblock
  \url{https://cloud.google.com/ai-platform/prediction/docs/reference/rest/v1/projects/predict}.

\bibitem{googlenetworkpricing}
{Google Cloud}.
\newblock {Network Pricing}.
\newblock \url{https://cloud.google.com/compute/network-pricing}.

\bibitem{gotzfried2017cache}
J.~G{\"o}tzfried, M.~Eckert, S.~Schinzel, and T.~M{\"u}ller.
\newblock {Cache attacks on Intel SGX}.
\newblock In {\em European Workshop on Systems Security}, 2017.

\bibitem{gRPC}
{gRPC Authors}.
\newblock {gRPC}: A high performance, open source universal rpc framework.
\newblock \url{https://grpc.io/}.

\bibitem{haykin1994neural}
S.~Haykin.
\newblock {\em Neural networks: a comprehensive foundation}.
\newblock Prentice Hall PTR, 1994.

\bibitem{he2016deep}
K.~He, X.~Zhang, S.~Ren, and J.~Sun.
\newblock Deep residual learning for image recognition.
\newblock In {\em Conference on computer vision and pattern recognition}, 2016.

\bibitem{hesamifard2017cryptodl}
E.~Hesamifard, H.~Takabi, and M.~Ghasemi.
\newblock Crypto{DL}: {D}eep neural networks over encrypted data.
\newblock {\em arXiv preprint 1711.05189}, 2017.

\bibitem{hunt2018chiron}
T.~Hunt, C.~Song, R.~Shokri, V.~Shmatikov, and E.~Witchel.
\newblock Chiron: Privacy-preserving machine learning as a service.
\newblock {\em arXiv preprint 1803.05961}, 2018.

\bibitem{hunt2018ryoan}
T.~Hunt, Z.~Zhu, Y.~Xu, S.~Peter, and E.~Witchel.
\newblock Ryoan: A distributed sandbox for untrusted computation on secret
  data.
\newblock In {\em OSDI}, 2016.

\bibitem{hynes2018efficient}
N.~Hynes, R.~Cheng, and D.~Song.
\newblock Efficient deep learning on multi-source private data.
\newblock {\em arXiv preprint 1807.06689}, 2018.

\bibitem{4769}
IBM.
\newblock {CEX7S / 4769 Overview - IBM Systems cryptographic hardware
  products}.
\newblock \url{https://www.ibm.com/security/cryptocards/pciecc4/overview}.

\bibitem{ibmhsm}
IBM.
\newblock {CryptoCards - HSMs, IBM Systems cryptographic HSMs}.
\newblock \url{https://www.ibm.com/security/cryptocards/hsms}.

\bibitem{jia2019memguard}
J.~Jia, A.~Salem, M.~Backes, Y.~Zhang, and N.~Z. Gong.
\newblock Memguard: Defending against black-box membership inference attacks
  via adversarial examples.
\newblock In {\em ACM CCS}, 2019.

\bibitem{jiang2018secure}
X.~Jiang, M.~Kim, K.~Lauter, and Y.~Song.
\newblock Secure outsourced matrix computation and application to neural
  networks.
\newblock In {\em ACM CCS}, 2018.

\bibitem{juuti2019prada}
M.~Juuti, S.~Szyller, S.~Marchal, and N.~Asokan.
\newblock Prada: protecting against dnn model stealing attacks.
\newblock In {\em EuroS\&P}, 2019.

\bibitem{juvekar2018gazelle}
C.~Juvekar, V.~Vaikuntanathan, and A.~Chandrakasan.
\newblock {GAZELLE}: {A} low latency framework for secure neural network
  inference.
\newblock In {\em USENIX Security}, 2018.

\bibitem{keller2016mascot}
M.~Keller, E.~Orsini, and P.~Scholl.
\newblock {MASCOT}: faster malicious arithmetic secure computation with
  oblivious transfer.
\newblock In {\em ACM CCS}, 2016.

\bibitem{keller2018overdrive}
M.~Keller, V.~Pastro, and D.~Rotaru.
\newblock Overdrive: making {SPDZ} great again.
\newblock In {\em EUROCRYPT}, 2018.

\bibitem{kocher1999differential}
P.~Kocher, J.~Jaffe, and B.~Jun.
\newblock Differential power analysis.
\newblock In {\em CRYPTO}, 1999.

\bibitem{kriz2014cifar}
A.~Krizhevsky, V.~Nair, and G.~Hinton.
\newblock {CIFAR}-10 and {CIFAR}-100 datasets.
\newblock \url{https://www.cs.toronto.edu/~kriz/cifar.html}, 2014.

\bibitem{krizhevsky2012imagenet}
A.~Krizhevsky, I.~Sutskever, and G.~E. Hinton.
\newblock Imagenet classification with deep convolutional neural networks.
\newblock In {\em NIPS}, 2012.

\bibitem{kumar2020cryptflow}
N.~Kumar, M.~Rathee, N.~Chandran, D.~Gupta, A.~Rastogi, and R.~Sharma.
\newblock {Cryptflow: Secure tensorflow inference}.
\newblock In {\em IEEE S\&P}, 2020.

\bibitem{lecun2010mnist}
Y.~LeCun, C.~Cortes, and C.~Burges.
\newblock {MNIST} handwritten digit database.
\newblock \url{http://yann.lecun.com/exdb/mnist}, 2010.

\bibitem{lee2017inferring}
S.~Lee, M.-W. Shih, P.~Gera, T.~Kim, H.~Kim, and M.~Peinado.
\newblock Inferring fine-grained control flow inside {SGX} enclaves with branch
  shadowing.
\newblock In {\em USENIX Security}, 2017.

\bibitem{lindell2008yao}
Y.~Lindell and B.~Pinkas.
\newblock A proof of security of yao’s protocol for two-party computation.
\newblock {\em Journal of Cryptology}, 22(2):161–188, 2008.

\bibitem{liu2017oblivious}
J.~Liu, M.~Juuti, Y.~Lu, and N.~Asokan.
\newblock Oblivious neural network predictions via {M}ini{ONN} transformations.
\newblock In {\em ACM CCS}, 2017.

\bibitem{lou2019glyph}
Q.~Lou, B.~Feng, G.~C. Fox, and L.~Jiang.
\newblock Glyph: Fast and accurately training deep neural networks on encrypted
  data.
\newblock {\em arXiv preprint 1911.07101}, 2019.

\bibitem{lyubashevsky2010ideal}
V.~Lyubashevsky, C.~Peikert, and O.~Regev.
\newblock On ideal lattices and learning with errors over rings.
\newblock In {\em EUROCRYPT}, 2010.

\bibitem{propublica}
T.~Meyer.
\newblock No warrant, no problem: How the government can get your digital data.
\newblock {\em ProPublica}, June 2014.
\newblock
  \url{https://www.propublica.org/special/no-warrant-no-problem-how-the-government-can-still-get-your-digital-data/}.

\bibitem{microsofttransparency}
{Microsoft}.
\newblock Government access to data.
\newblock
  \url{https://news.microsoft.com/cloudforgood/policy/briefing-papers/trusted-cloud/government-access-data.html}.

\bibitem{azureml}
{Microsoft Azure}.
\newblock {Azure Machine Learning}.
\newblock \url{https://azure.microsoft.com/en-us/services/machine-learning/}.

\bibitem{azurenetworkpricing}
{Microsoft Azure}.
\newblock {Pricing - Bandwidth}.
\newblock \url{https://azure.microsoft.com/en-us/pricing/details/bandwidth/}.

\bibitem{azureinstancepricing}
{Microsoft Azure}.
\newblock {Pricing - Linux Virtual Machines}.
\newblock
  \url{https://azure.microsoft.com/en-us/pricing/details/virtual-machines/linux/}.

\bibitem{mishra2020delphi}
P.~Mishra, R.~Lehmkuhl, A.~Srinivasan, W.~Zheng, and R.~A. Popa.
\newblock {DELPHI}: A cryptographic inference service for neural networks.
\newblock In {\em USENIX Security}, 2020.

\bibitem{moghimi2017cachezoom}
A.~Moghimi, G.~Irazoqui, and T.~Eisenbarth.
\newblock {CacheZoom}: How {SGX} amplifies the power of cache attacks.
\newblock In {\em Conference on Cryptographic Hardware and Embedded Systems
  (CHES)}, 2017.

\bibitem{mohassel2018aby}
P.~Mohassel and P.~Rindal.
\newblock {ABY$^3$}: {A} mixed protocol framework for machine learning.
\newblock In {\em ACM CCS}, 2018.

\bibitem{mohassel2017secureml}
P.~Mohassel and Y.~Zhang.
\newblock Secure{ML}: {A} system for scalable privacy-preserving machine
  learning.
\newblock In {\em IEEE S\&P}, 2017.

\bibitem{narra2019privacy}
K.~G. Narra, Z.~Lin, Y.~Wang, K.~Balasubramaniam, and M.~Annavaram.
\newblock Privacy-preserving inference in machine learning services using
  trusted execution environments.
\newblock {\em arXiv preprint 1912.03485}, 2019.

\bibitem{ohrimenko2016oblivious}
O.~Ohrimenko, F.~Schuster, C.~Fournet, A.~Mehta, S.~Nowozin, K.~Vaswani, and
  M.~Costa.
\newblock Oblivious multi-party machine learning on trusted processors.
\newblock In {\em USENIX Security}, 2016.

\bibitem{orekondy2019prediction}
T.~Orekondy, B.~Schiele, and M.~Fritz.
\newblock Prediction poisoning: Towards defenses against dnn model stealing
  attacks.
\newblock In {\em International Conference on Learning Representations (ICLR)},
  2019.

\bibitem{paillier99public}
P.~Paillier.
\newblock Public-key cryptosystems based on composite degree residuosity
  classes.
\newblock In {\em EUROCRYPT}, 1999.

\bibitem{sgxavailability}
{Raejeanne Skillern}.
\newblock {Intel SGX Data Protections Now Available for Mainstream Cloud
  Platforms}.
\newblock {\em Intel IT Peer Network}, Feb. 2019.

\bibitem{rane2015raccoon}
A.~Rane, C.~Lin, and M.~Tiwari.
\newblock Raccoon: Closing digital side-channels through obfuscated execution.
\newblock In {\em USENIX Security}, 2015.

\bibitem{sealcrypto}
M.~Research.
\newblock {M}icrosoft {SEAL} (release 3.3).
\newblock \url{https://github.com/Microsoft/SEAL}, 2019.

\bibitem{riazi2019xonn}
M.~S. Riazi, M.~Samragh, H.~Chen, K.~Laine, K.~E. Lauter, and F.~Koushanfar.
\newblock {XONN: XNOR}-based oblivious deep neural network inference.
\newblock In {\em USENIX Security}, 2019.

\bibitem{riazi2018chameleon}
M.~S. Riazi, C.~Weinert, O.~Tkachenko, E.~M. Songhori, T.~Schneider, and
  F.~Koushanfar.
\newblock Chameleon: {A} hybrid secure computation framework for machine
  learning applications.
\newblock In {\em ACM ASIA CCS}, 2018.

\bibitem{bighackbloomberg}
J.~Robertson and M.~Riley.
\newblock {The Big Hack: How China Used a Tiny Chip to Infiltrate U.S.
  Companies}.
\newblock {\em Bloomberg Businessweek}, Oct. 2018.
\newblock
  \url{https://www.bloomberg.com/news/features/2018-10-04/the-big-hack-how-china-used-a-tiny-chip-to-infiltrate-america-s-top-companies}.

\bibitem{rouhani2018deepsecure}
B.~D. Rouhani, M.~S. Riazi, and F.~Koushanfar.
\newblock Deep{S}ecure: {S}calable provably-secure deep learning.
\newblock In {\em Annual Design Automation Conference}, 2018.

\bibitem{ryffel2020ariann}
T.~Ryffel, D.~Pointcheval, and F.~Bach.
\newblock {ARIANN: Low-Interaction Privacy-Preserving Deep Learning via
  Function Secret Sharing}.
\newblock In {\em NIPS}, 2020.

\bibitem{shokri2017membership}
R.~Shokri, M.~Stronati, C.~Song, and V.~Shmatikov.
\newblock Membership inference attacks against machine learning models.
\newblock In {\em IEEE S\&P}, 2017.

\bibitem{svozil1997introduction}
D.~Svozil, V.~Kvasnicka, and J.~Pospichal.
\newblock Introduction to multi-layer feed-forward neural networks.
\newblock {\em Chemometrics and intelligent laboratory systems}, 39(1):43--62,
  1997.

\bibitem{tensorflowcoreapi}
{Tensorflow}.
\newblock Tensorflow core 2.0.
\newblock \url{https://www.tensorflow.org/versions/r2.0/api_docs}.

\bibitem{tople2018privado}
S.~Tople, K.~Grover, S.~Shinde, R.~Bhagwan, and R.~Ramjee.
\newblock {PRIVADO}: {P}ractical and secure {DNN} inference.
\newblock {\em arXiv preprint 1810.00602}, 2018.

\bibitem{tramer2018slalom}
F.~Tramer and D.~Boneh.
\newblock Slalom: Fast, verifiable and private execution of neural networks in
  trusted hardware.
\newblock In {\em International Conference on Learning Representations (ICLR)},
  2019.

\bibitem{tramer2016stealing}
F.~Tram{\`e}r, F.~Zhang, A.~Juels, M.~K. Reiter, and T.~Ristenpart.
\newblock Stealing machine learning models via prediction apis.
\newblock In {\em USENIX Security}, 2016.

\bibitem{cyber20}
J.~Valinsky.
\newblock 7 of the biggest hacks in history.
\newblock {\em CNN Business}, July 2019.
\newblock
  \url{https://www.cnn.com/2019/07/30/tech/biggest-hacks-in-history/index.html}.

\bibitem{van2018foreshadow}
J.~Van~Bulck, M.~Minkin, O.~Weisse, D.~Genkin, B.~Kasikci, F.~Piessens,
  M.~Silberstein, T.~F. Wenisch, Y.~Yarom, and R.~Strackx.
\newblock {Foreshadow: Extracting the keys to the Intel {SGX} kingdom with
  transient out-of-order execution}.
\newblock In {\em USENIX Security}, 2018.

\bibitem{wagh2019securenn}
S.~Wagh, D.~Gupta, and N.~Chandran.
\newblock Secure{NN}: 3-party secure computation for neural network training.
\newblock In {\em PETS}, 2019.

\bibitem{wagh2020falcon}
S.~Wagh, S.~Tople, F.~Benhamouda, E.~Kushilevitz, P.~Mittal, and T.~Rabin.
\newblock {FALCON}: Honest-majority maliciously secure framework for private
  deep learning.
\newblock {\em arXiv preprint 2004.02229}, 2020.

\bibitem{walen2016intel}
J.~Wallen.
\newblock {Is the Intel Management Engine a backdoor?}
\newblock {\em TechRepublic}, July 2016.
\newblock
  \url{https://www.techrepublic.com/article/is-the-intel-management-engine-a-backdoor/}.

\bibitem{libfss}
F.~Wang.
\newblock {Function Secret Sharing (FSS) Library}.
\newblock \url{https://github.com/frankw2}.

\bibitem{sloccount}
{Wheeler, David}.
\newblock {sloccount - count source lines of code (SLOC)}.
\newblock \url{https://linux.die.net/man/1/sloccount}.

\bibitem{xie2014crypto}
P.~Xie, M.~Bilenko, T.~Finley, R.~Gilad-Bachrach, K.~Lauter, and M.~Naehrig.
\newblock Crypto-nets: Neural networks over encrypted data.
\newblock {\em arXiv preprint 1412.6181}, 2014.

\bibitem{xu2018controlled}
Y.~Xu, W.~Cui, and M.~Peinado.
\newblock Controlled-channel attacks: Deterministic side channels for untrusted
  operating systems.
\newblock In {\em IEEE S\&P}, 2015.

\bibitem{yao1982protocols}
A.~C. Yao.
\newblock Protocols for secure computations.
\newblock In {\em FOCS}, 1982.

\bibitem{zahur2015obliv}
S.~Zahur and D.~Evans.
\newblock {Obliv-C}: A language for extensible data-oblivious computation.
\newblock Cryptology ePrint Archive, Report 1153, 2015.

\bibitem{roguegoogle}
K.~Zetter.
\newblock Ex-{G}oogler allegedly spied on user e-mails, chats.
\newblock {\em Wired}, Sept. 2010.
\newblock \url{https://www.wired.com/2010/09/google-spy/}.

\end{thebibliography}
\end{flushleft}
}
\label{p:last}
\end{document}